\definecolor{red}{rgb}{1,0,0}
\definecolor{blue}{rgb}{0,0,1}
\definecolor{green}{rgb}{0,0.5,0}
\definecolor{magenta}{rgb}{1,0,1}
\newcommand{\be}{\begin{equation}}
\newcommand{\ee}{\end{equation}}
\newcommand{\bea}{\begin{eqnarray}}
\newcommand{\eea}{\end{eqnarray}}
\newcommand{\bal}{\begin{align}}
\newcommand{\eal}{\end{align}}
\newcommand{\nn}{\nonumber}
\newcommand{\eye}{\mbox{$\mbox{1}\!\mbox{l}\;$}}
\renewcommand{\vec}[1]{\boldsymbol{#1}}
\newtheorem{thm}{Theorem}
\newtheorem{defn}[thm]{Definition}
\newtheorem{prop}{Proposition}
\newtheorem{lemma}{Lemma}
\newtheorem{corr}{Corollary}
\begin{document}

\title{Rotor-angle versus voltage instability in the third-order model for synchronous generators}

\author{Konstantin~Sharafutdinov}
\email[Electronic mail: ]{konstantin.sharafutdinov@phystech.edu}
\affiliation{Forschungszentrum J\"ulich, Institute for Energy and Climate Research -
Systems Analysis and Technology Evaluation (IEK-STE), 52428 J\"ulich, Germany}
\affiliation{Institute for Theoretical Physics, University of Cologne, 50937 K\"oln, Germany}

\author{Leonardo~Rydin~Gorj\~ao}
\email[Electronic mail: ]{leonardo.rydin@gmail.com}
\affiliation{Forschungszentrum J\"ulich, Institute for Energy and Climate Research -
Systems Analysis and Technology Evaluation (IEK-STE), 52428 J\"ulich, Germany}

\author{Moritz~Matthiae}
\email[Electronic mail: ]{momat@nanotech.dtu.dk}
\affiliation{Forschungszentrum J\"ulich, Institute for Energy and Climate Research -
Systems Analysis and Technology Evaluation (IEK-STE), 52428 J\"ulich, Germany}
\affiliation{Department of Micro- and Nanotechnology, Technical University of Denmark, 2800 Kongens Lyngby, Denmark}

\author{Timm~Faulwasser}
\email[Electronic mail: ]{timm.faulwasser@kit.edu}
\affiliation{Institute for Automation and Applied Informatics, Karlsruhe Institute of Technology, 76344 Karlsruhe, Germany}

\author{Dirk~Witthaut}
\email[Electronic mail: ]{d.witthaut@fz-juelich.de}
\affiliation{Forschungszentrum J\"ulich, Institute for Energy and Climate Research -
Systems Analysis and Technology Evaluation (IEK-STE), 52428 J\"ulich, Germany}
\affiliation{Institute for Theoretical Physics, University of Cologne, 50937 K\"oln, Germany}

%. Email: momat@nanotech.dtu.dk
%. Email: timm.faulwasser@kit.edu
%  Email: d.witthaut@fz-juelich.de.
%  Email: k.sharafutdinov@fz-juelich.de
%  Email: leonardo.rydin@gmail.com

%\date{\today }

\markboth{Journal of \LaTeX\ Class Files,~Vol.~13, No.~9, September~2014}%
{Shell \MakeLowercase{\textit{et al.}}: Bare Demo of IEEEtran.cls for Journals}

\begin{abstract}
We investigate the interplay of rotor-angle and voltage stability in electric power systems. To this end, we carry out a local stability analysis of the third-order model which entails the classical power-swing equations and the voltage dynamics. We provide necessary and sufficient stability conditions and investigate different routes to instability. For the special case of a two-bus system we analytically derive a global stability map. 
\end{abstract}

\maketitle

% --- Content --------------------------------------------------------------------

\begin{quotation}
A reliable supply of electric power requires a stable operation of the electric power grid. Thousands of generators must run in a synchronous state with fixed voltage magnitudes and fixed relative phases. The ongoing transition to a renewable power system challenges the stability as line loads and temporal fluctuations increase. Maintaining a secure supply thus requires a detailed understanding of power system dynamics and stability. Among various models describing the dynamics of synchronous generators, analytic results are available mainly for the simplest second-order model which describes only the dynamics of nodal frequencies and voltage phase angles. In this article we analyze the stability of the third order model including the transient dynamics of voltage magnitudes. Within this model we provide analytical insights into the interplay of voltage and rotor-angle dynamics and characterize possible sources of instability. We provide novel stability criteria and support our studies with the analysis of a network of two coupled nodes, where a full analytic solution for the equilibria is obtained and a bifurcation analysis is performed.
\end{quotation}

\section{Introduction}

A stable supply of electric power is essential for the economy, industry and our daily life. The ongoing transition to a renewable generation challenges the stability of power grids in several ways \cite{Sims11}. Electric power has to be transmitted over large distances, leading to high transmission line loads at peak times \cite{Pesc14,Witthaut16}. Wind and solar power generation fluctuates on various time scales, requiring more flexibility and challenging dynamic stability \cite{Anvari16,Schmieten17,Schaefer17,Schafer18}. Furthermore, the effective inertia of the grid decreases such that power fluctuations have a larger impact on system stability \cite{Ulbig14}. 

A reliable power supply requires a detailed understanding of power system dynamics and stability. Numerical studies are carried out routinely at different levels of modelling detail (see Refs. \onlinecite{Mach08,Sauer98} for a comparison of different models). These studies provide a concrete stability assessment for one given power grid or components. For instance, the performance of different models has been evaluated for the Western System Coordinating Council System (WSCC) and the New England \& New York system
in Ref.~\onlinecite{Weck13}. Analytic studies into the mathematical structure of the problem have been obtained mainly for second-order models based on the power-swing equations \cite{Dorf10,Dorf13,Mott13,Schaefer15}. These models describe only the dynamics of nodal frequencies and rotor angles, assuming the voltage magnitudes to be constant in time \cite{Mach08,Berg81,Nish15}. Voltage stability is usually investigated numerically, see Ref. \onlinecite{Simpson16a} and references therein.

The present paper aims at providing analytical insights into the interplay of voltage and rotor-angle dynamics in electric power systems \cite{Kun04}. We analyze the stability of the third-order model of synchronous generators, including the transient voltage along the $q$-axis, which has been studied so far mainly computationally \cite{Mach08,Sauer98,Schm13,Jinp16,Ghah08,Dehg08, Karr04}. We provide an analytical decomposition of the Jacobian into the frequency and voltage subsystems, which gives rise to a novel stability criterion (cf. Proposition \ref{cor:schur}), and a characterization of possible sources of instability. For the most elementary network of two coupled nodes a full analytic solution of the equilibria is obtained and a bifurcation analysis is performed. 
%These results shall foster our understanding of power system stability and thus guide the numerical analysis of more detailed models and the assessment of real-world systems.

The remainder of the paper is structured as follows: Section \ref{sec:problem} recalls the third-order model. In Section \ref{sec:local_stability} we perform a local stability analysis via the Jacobian linearization. 
Section \ref{sec:instability} discusses routes to instability, while in Section \ref{sec:example} draws upon the example of a two bus system. The paper ends with conclusions in Section \ref{sec:conclusions}.

\section{The third-order model and its equilibria} \label{sec:problem}

The third-order or one-axis model describes the transient dynamics of synchronous machines \cite{Mach08, Sauer98}, in particular the 
\begin{itemize}
\item
 power angle $\delta(t)$ relative to the grid reference frame,
\item
 the angular frequency $\omega(t) = \dot \delta$ relative to the grid reference frame and
\item
  the transient voltage $E'_q(t)$ in the $q$-direction of a co-rotating frame of reference.
\end{itemize} 
The third-order model does not cover subtransient effects and it assumes that the transient voltage in the $d$-direction of the co-rotating frame vanishes. The equations of motion for one machine are given by \cite{Mach08}
\begin{subequations} \label{eq:3rdorder-pre}
\begin{align}
   \dot \delta &= \omega, \\
   M \dot \omega &=   - D \omega + P^{m} - P^{el}, \\
   T'_d \dot E'_q &= E^{f} - E'_q + (X_d - X_d') I_d,
  \end{align}
\end{subequations}
where the dot denotes differentiation with respect to time. Here, the symbol $P^{m}$ denotes the effective mechanical input power of the machine and $E^{f}$ the internal voltage or field flux.  $P^{el}$ stands for electrical power out-flow. The parameters $D$ and $M$ denote the damping and the inertia of the mechanical motion and $T'_d$ the relaxation time of the transient voltage dynamics. The voltage dynamics further depends on the difference of the static ($X_d$) and transient ($X'_d$) reactances along the $d$-axis, where $X_d - X_d' > 0$ in general, and the current along $d$-axis $I_d$. 

In this article we consider an extended grid consisting of several synchronous machines labeled by $j=1,2,\ldots,N$. Neglecting transmission line losses, the electric power exchanged with the grid and the current at the $j$th machine read \cite{Schm13}
\begin{equation}
\begin{aligned}
    P^{el}_j &= \sum_{\ell=1}^N E'_{q,j} E'_{q,\ell} B_{j,\ell} \sin(\delta_j- \delta_\ell ), \\
    I_{d,j} &= \sum_{\ell=1}^N E'_{q,\ell} B_{j,\ell} \cos(\delta_\ell- \delta_j ),
    \label{eq:PI-network}
\end{aligned}
\end{equation} 
where the $E'_{q,j}$ and $\delta_j$ are the transient voltage and the power angle of the $j$th machine, the parameter $B_{jk} \ge 0$ denotes the susceptance of the transmission line $(j,k)$ and $B_{jj} \le 0$ denotes the shunt susceptance of the $j$th node. 

Using equations (\ref{eq:PI-network}), the equations of motion assume a particularly simple form \cite{Schm13,Jinp16,Schmieten17}. For the sake of notational convenience we drop the prime as well the subscripts $d$ and $q$ in the following and obtain
\begin{align}
   \dot \delta_j &= \omega_j,  \nn\\
 M_j \dot \omega_j &=  P^{m}_j - D_j \omega_j + \sum_{\ell=1}^N 
                       E_j E_\ell B_{j,\ell} \sin(\delta_\ell- \delta_j ),   \label{eq:3rdorder} \\
   T_j \dot E_j &= E_j^{f} - E_j + (X_j-X'_j) \sum_{\ell = 1}^N E_\ell B_{j,\ell} \cos(\delta_\ell-\delta_j).\nn
\end{align}

Many studies consider a grid consisting of synchronous generators and ohmic loads, which can then be eliminated using a Kron reduction \cite{Dorf11b}. The reduced system consists of the generator nodes only and the parameters $B_{jk}$ and $P^{m}_j$ represent effective values characterizing the reduced network.

The negligence of line losses is a common simplification in power grid stability assessment, as the ohmic resistance is typically much smaller than the susceptance in high-voltage power transmission grids. This assumption is not valid in distribution grids where resistance and susceptance are comparable. Furthermore, losses are expected to become more important when the transmitted power is large, i.e. at the border of the stability region.

Stationary operation of a power grid corresponds to a state with constant voltages and perfect phase-synchronization, i.e.~a point in configuration space where all $E_j$, $\omega_j$ and $\delta_j - \delta_\ell$ are constant in time. The latter condition requires that all nodes rotate at the same frequency $\delta_j(t) = \Omega t + \delta_j^\circ$ for all $j=1,\ldots,N$, such that we obtain the conditions
\be 
   \dot \omega_j = \dot E_j = 0, \quad \dot \delta_j =  \Omega \quad \mbox{for all } \; j=1,\ldots,N.
\ee
Strictly speaking, we are searching for a stable limit cycle, but all points along the cycle are physically equivalent. We can focus on any point on the cycle as a representative for the equivalence class and call this an equilibrium in the following. Subsequently, the superscript $\cdot^\circ$ is used to denote the values of the rotor phase angle, frequency and voltage in this equilibrium state.
Perturbations along the cycle, where we add or subtract a global phase shift $\delta$ from all phases $\delta_j$ simultaneously, do not affect phase synchronization and thus can be excluded from the stability analysis. We will make this precise in Definition \ref{def-stab}.

For the third-order model (\ref{eq:3rdorder}) an equilibrium state of the power grid is thus given by the nonlinear algebraic equations
\begin{subequations}   \label{eq:3rd-fixed}
\begin{align}
   \Omega &= \omega_j^{\circ},  \\
   0 &=  \underbrace{P^{m}_j -D_j \Omega}_{=: P_j}   + \sum_{\ell=1}^N 
                       E_j^{\circ} E_\ell^{\circ} B_{j,\ell} \sin(\delta_\ell ^{\circ}- \delta_j^{\circ} ),  \label{eq:3rd-fixed_b}\\
   0 &= E_j^{f} - E_j^{\circ} + (X_j-X'_j) \sum_{\ell = 1}^N E_\ell^{\circ} B_{j,\ell} \cos(\delta_\ell^{\circ}-\delta_j^{\circ}).  \label{eq:3rd-fixed_c}
\end{align}    
\end{subequations}
We note that many equilibria, stable and unstable, can exist in networks with sufficiently complex topology \cite{Manik17}.

\section{Linear Stability analysis} \label{sec:local_stability}

It is well-understood in the literature  \cite{Stro01, Kun04} that local stability properties of an equilibrium, i.e. stability with respect to small perturbations, can be evaluated by linearizing the equations of motion \eqref{eq:3rdorder}. For linear stability analysis we introduce perturbations $\xi$, $\nu$ and $\epsilon$: 
\be
   \delta_j(t) = \delta_j^{\circ} + \xi_j(t),  \; 
   \omega_j(t) =  \omega_j^{\circ} + \nu_j(t), \; E_j(t) = E_j^{\circ} + \epsilon_j(t). \nn 
\ee
The main question is then whether the perturbations $\xi$, $\nu$ and $\epsilon$ grow or decay over time. If all perturbations decay (exponentially), the equilibrium is said to be `linearly' stable. In the literature, this property is sometimes also called `local (asymptotic/exponential) stability' or `small system stability', cf. Refs. \onlinecite{Stro01, Khalil02}. We also refer to Ref. \onlinecite{Kun04} for an overview of stability notions for power systems. 
Substituting the ansatz from above into \eqref{eq:3rdorder}, transferring to a frame of reference rotating with the frequency $\Omega$ and keeping only terms linear in $\xi_j$, $\nu_j$ and $\epsilon_j$ yields
\begin{align}
    \dot \xi_j &= \nu_j, \nn  \\
    M_j \dot \nu_j &= -D_j \nu_j - \sum_{\ell=1}^N \Lambda_{j,\ell} \xi_\ell
                                            +                     \sum_{\ell=1}^N A_{\ell,j} \epsilon_\ell,  \label{eq:linstab1}   \\ 
    T_j  \dot \epsilon_j   &= - \epsilon_j + (X_j-X'_j) \sum_{\ell=1}^N H_{j,\ell} \epsilon_\ell 
           + \sum_{\ell=1}^N A_{j,\ell} \xi_\ell,    
          \nn                         
\end{align}
whereby 
$\vec \Lambda, \vec A, \vec H \in \mathbb{R}^{N \times N}$
are given by
\begin{align}
    \Lambda_{j,\ell} &= 
    \left\{ \begin{array}{l l}
        - E_j^{\circ} E_\ell^{\circ} B_{j,\ell} \cos(\delta_\ell^{\circ} - \delta_j^{\circ}) \; & \mbox{for} \, j \neq \ell \\
        \sum_{k\neq j}  E_j^{\circ} E_k^{\circ} B_{j,k} \cos(\delta_k^{\circ} - \delta_j^{\circ}) \; & \mbox{for} \, j = \ell \\
     \end{array} \right.  \nn \\
     A_{j,\ell} &= 
    \left\{ \begin{array}{l l}
        - E_\ell^{\circ} B_{j,\ell} \sin(\delta_\ell^{\circ} - \delta_j^{\circ}) \; & \hspace*{8mm} \mbox{for} \, j \neq \ell \\
        \sum_k E_k^{\circ} B_{j,k} \sin(\delta_k^{\circ} - \delta_j^{\circ}) \; & \hspace*{8mm} \mbox{for} \, j = \ell \\
     \end{array} \right.    \nn \\
    H_{j,\ell} &=  B_{j,\ell} \cos(\delta_\ell^{\circ} - \delta_j^{\circ}) .
     \label{eq:def_matrices}
\end{align}
Furthermore, we define the diagonal matrices $\vec M$, $\vec D$, $\vec X$ and $\vec T$ (all in $\mathbb{R}^{N \times N}$)  with elements $M_j$, $D_j$, $(X_j-X'_j)$ and $T_j$ for $j=1,\ldots,N$, respectively. We note that all these elements are strictly positive.

Now we can recast \eqref{eq:linstab1} into matrix form, defining the vectors $\vec \xi = (\xi_1,\ldots,\xi_N)^\top$, 
$\vec \nu = (\nu_1,\ldots,\nu_N)^\top$ and $\vec \epsilon = (\epsilon_1,\ldots,\epsilon_N)^\top$,
where the superscript $\top$ denotes the transpose of a matrix or vector.
We then obtain
\be
   %\frac{d}{dt} \!
    \begin{pmatrix} \dot{\vec \xi} \\  \dot{\vec \nu} \\ \dot{\vec \epsilon} \end{pmatrix} \! = \!
    \underbrace{ \begin{pmatrix} \vec 0 & \eye & \vec 0 \\
                          -  \vec M^{-1} \vec \Lambda & - \vec M ^{-1} \vec D & \vec{M}^{-1} \vec A^\top \\
                           \vec T^{-1} \vec  X \vec A & \vec 0 &  \vec{T}^{-1} ( \vec X \vec H- \eye) \\
                 \end{pmatrix} }_{\displaystyle =: \vec J}        
    \!\! \begin{pmatrix} \vec \xi \\ \vec \nu \\ \vec \epsilon \end{pmatrix}.       
    \label{eq:eom-lin}                      
\ee
An equilibrium is dynamically stable if small perturbations are exponentially damped. This is the case if the real part of all relevant eigenvalues of the Jacobian matrix $\vec J$ are strictly smaller than zero \cite{Stro01}. If an eigenvalue $\mu$ has a positive real part then the corresponding eigenmode grows exponentially as $e^{\Re(\mu) t}$ and the system is linearly unstable. If the real part equals zero then the question for local stability cannot be decided using the linearization approach and a full nonlinear treatment based on a center-manifold approximation is necessary, see Ref. \onlinecite{Stro01}.
Typically, power systems are nonlinearly unstable in this case \cite{14bifurcation}.

We note that $\vec J$ always has one eigenvalue $\mu_1 = 0$ with the eigenvector 
%\be
$
    \left(\vec \xi ~ \vec \nu ~ \vec \epsilon \right)^\top
     = \left( \vec 1 ~ \vec 0 ~ \vec 0 \right)^\top.
     $
%\ee
This corresponds to a global shift of the machines' phase angles which has no physical significance. We thus exclude this trivial eigenmode from the stability analysis in the following, i.e. we consider only perturbations from the orthogonal complement 
\be
    \mathcal{D}_\perp = \left\{ (\vec \xi, \vec \nu, \vec \epsilon)^\top \in \mathbb{R}^{3N} | 
      \vec 1 ^\top \vec \xi = 0 \right\}.
\ee
Similarly, we define the projection onto the angle and voltage subspaces (omitting frequency) 
\be
\tilde{\mathcal{D}}_\perp = \left\{ (\vec \xi, \vec \epsilon)^\top \in \mathbb{R}^{2N} | \vec 1 ^\top \vec \xi = 0 \right\},
\ee
and the angle subspace (omitting frequency and voltage) 
\be
\bar{\mathcal{D}}_\perp = \left\{ \vec \xi^\top \in \mathbb{R}^{N} | \vec 1 ^\top \vec \xi = 0 \right\}.
\ee
Furthermore, we fix the ordering of all eigenvalues such that
\be
  \mu_1 = 0, \qquad     \Re(\mu_2) \le \Re(\mu_3) \le \cdots \le \Re(\mu_{3N}).\label{orderEig}
\ee
We then have the following consistent definition of linear stability in all directions transversal to the limit 
cycle (cf. Ref. \onlinecite{Stro01}). 
\begin{defn}
The equilibrium $(\delta^\circ_j, \omega^\circ_j, E^\circ_j)$ is linearly stable if $\Re(\mu_n) < 0$  for all eigenvalues $n = 2,\ldots,3N$ of the Jacobian matrix $\vec J$ defined in (\ref{eq:eom-lin}).
\label{def-stab}
\end{defn}

We note that an eigenvalue $\mu_n$ can be complex, but then also the complex conjugate $\mu_n^*$ is an eigenvalue. Complex eigenvalues correspond to oscillatory modes, i.e. the variables oscillate around the equilibrium with a growing or shrinking amplitude. Indeed, the following lemma shows that only damped oscillations are possible. 
\begin{lemma}
\label{th:complex}
If an eigenvalue is complex $\Im(\mu_n) \neq 0$, then its real part is strictly negative $\Re(\mu_n) < 0$.
\end{lemma}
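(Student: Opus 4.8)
\emph{Proof proposal.} The plan is to start from the eigenvalue relation $\vec J\,(\vec\xi,\vec\nu,\vec\epsilon)^\top = \mu\,(\vec\xi,\vec\nu,\vec\epsilon)^\top$. The first block row gives $\vec\nu = \mu\vec\xi$, which I substitute into the other two block rows; multiplying the second row by $\vec M$ and the third by $\vec X^{-1}\vec T$ (both positive diagonal, hence harmless) leaves the coupled pair
\begin{equation}
  \bigl(\mu^2\vec M + \mu\vec D + \vec\Lambda\bigr)\vec\xi = \vec A^\top\vec\epsilon, \qquad
  \bigl(\mu\,\vec X^{-1}\vec T + \vec X^{-1} - \vec H\bigr)\vec\epsilon = \vec A\,\vec\xi .
\end{equation}
Carrying the factor $\vec X^{-1}$ through the voltage equation is deliberate: $\vec X\vec H$ is in general not symmetric, whereas $\vec H$ and $\vec\Lambda$ are, and $\vec X^{-1}\vec T$, $\vec X^{-1}$ remain positive diagonal.

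Next I would take Hermitian inner products: left-multiply the first equation by $\vec\xi^*$ and the second by $\vec\epsilon^*$ (conjugate transposes). Because $\vec M,\vec D,\vec T,\vec X$ are positive diagonal and $\vec\Lambda,\vec H$ are real symmetric, the coefficients $a:=\vec\xi^*\vec M\vec\xi$, $b:=\vec\xi^*\vec D\vec\xi$, $c:=\vec\xi^*\vec\Lambda\vec\xi$, $p:=\vec\epsilon^*\vec X^{-1}\vec T\vec\epsilon$, $s:=\vec\epsilon^*\vec X^{-1}\vec\epsilon$, $h:=\vec\epsilon^*\vec H\vec\epsilon$ are all real, with $a,b,p,s\ge 0$. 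Since $\vec A$ is real, the two right-hand sides are complex conjugates of each other, $\vec\xi^*\vec A^\top\vec\epsilon = \overline{\vec\epsilon^*\vec A\vec\xi}$ (a scalar equals its transpose), so eliminating this common quantity gives the single scalar identity $\mu^2 a + \mu b + c = \bar\mu\,p + s - h$.

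Then I would first dispose of the degenerate case $\vec\xi=\vec 0$: it forces $\vec\nu = \mu\vec\xi = \vec 0$, so $\vec\epsilon\neq\vec 0$, and the voltage equation collapses to $\mu p + s - h = 0$ with $p>0$, which would make $\mu$ real --- excluded here. Hence $\vec\xi\neq\vec 0$ and $a,b>0$. Writing $\mu = \sigma + i\rho$ and taking the imaginary part of $\mu^2 a + \mu b + c = \bar\mu p + s - h$ makes the real terms $c$ and $s-h$ drop out, leaving $\rho\,(2a\sigma + b + p)=0$. Since the eigenvalue is complex, $\rho=\Im(\mu)\neq 0$, hence $\Re(\mu)=\sigma = -(b+p)/(2a)<0$, which is the claim.

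The only genuine obstacle is the bookkeeping in the first step: one must choose the left-multipliers so that the coupling matrix appears as $\vec A$ on one side and $\vec A^\top$ on the other --- so that the two boundary terms become conjugate scalars --- while keeping every matrix that survives inside a quadratic form Hermitian and positive; this is exactly what forces one to divide the reactance matrix $\vec X$ out of the voltage equation rather than leaving it in. Everything after that is a one-line imaginary-part computation, and it notably uses no definiteness assumption on $\vec\Lambda$ or $\vec H$.
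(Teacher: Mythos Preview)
Your proof is correct and follows essentially the same route as the paper: eliminate $\vec\nu=\mu\vec\xi$, symmetrize the voltage block by pulling out $\vec X^{-1}$, pair the two equations via Hermitian inner products so that the $\vec A$/$\vec A^\top$ coupling terms become complex conjugates, and read off $\Re(\mu)=-(b+p)/(2a)$ from the imaginary part. Your explicit handling of the degenerate case $\vec\xi=\vec 0$ (ensuring $a>0$ so the division is legitimate) is a detail the paper glosses over, so your write-up is in fact slightly more complete.
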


\begin{proof}
Recall the eigenvalue problem for the Jacobian
\[
    \vec J  \begin{pmatrix} \vec \xi \\ \vec \nu \\ \vec \epsilon \end{pmatrix}  
        = \mu  \begin{pmatrix} \vec \xi \\ \vec \nu \\ \vec \epsilon \end{pmatrix}.
    \label{eq:def-eigenvalue}    
\]
Decomposition yields
\begin{subequations} \label{eq:eigenvalue-comp}
\begin{align}
    \vec \nu &= \mu \vec \xi,  \label{eq:eigenvalue-comp_a}\\
     - \vec \Lambda \vec \xi - \vec D \vec \nu + \vec A^\top \vec \epsilon &= \mu \vec M \vec \nu,  \label{eq:eigenvalue-comp_b}\\
     \vec X \vec A \vec \xi  +  ( \vec X \vec H - \eye) \vec \epsilon &= \mu \vec T \vec \epsilon.  \label{eq:eigenvalue-comp_c}
%     \label{eq:eigenvalue-comp}    
\end{align}
\end{subequations}
Substituting \eqref{eq:eigenvalue-comp_a} into \eqref{eq:eigenvalue-comp_b} and multiplying  \eqref{eq:eigenvalue-comp_c} with $\vec X^{-1}$
yields
\begin{subequations}    \label{eq:tomultiply}
\begin{align}
   - \vec \Lambda \vec \xi - \mu \vec D \vec \xi + \vec A^\top \vec \epsilon 
        &= \mu^2 \vec M \vec \xi, \\
     \vec A \vec \xi + (\vec H - \vec X^{-1}) \vec \epsilon &= \mu \Vec X^{-1} \vec T \vec \epsilon .
\end{align}
\end{subequations}
Multiplying the two last equations with the Hermitian conjugates  $\vec \xi^\dagger$ and $\vec \epsilon^\dagger$ from the left, respectively, and equating $(\vec \xi^\dagger \vec A^\top \vec \epsilon)^\dagger = \vec \epsilon^\dagger \vec A \vec \xi$, one obtains
\begin{multline}
     \mu^2 \langle \vec \xi, \vec M \vec \xi \rangle  
      + \mu \langle \vec \xi, \vec D \vec \xi \rangle
      - \mu^* \langle \vec \epsilon, \vec T \vec X^{-1} \vec \epsilon \rangle  \\
     \qquad   + \langle \vec \xi,  \vec \Lambda  \vec \xi \rangle
            + \langle \vec \epsilon , ( \vec H - \vec X^{-1}  ) \vec \epsilon \rangle  = 0,
      \label{eq:quadratic2}      
\end{multline}
where $\langle \cdot,\cdot\rangle$ is the standard scalar product on $\mathbb{C}^N$. 
All matrices in these expressions are Hermitian such that all scalar products are real. Thus we can easily divide \eqref{eq:quadratic2}  into real and imaginary part obtaining
\be
   2 \Re(\mu) \Im(\mu)  \langle \vec \xi, \vec M \vec \xi \rangle 
    +  \Im(\mu) \big[ \langle \vec \xi, \vec D \vec \xi \rangle
    +  \langle \vec \epsilon, \vec T \vec X^{-1} \vec \epsilon \rangle \big] = 0. 
    \label{eq:mu-realimag}
\ee
This condition can be satisfied in two ways:
\begin{itemize}
\item[(1)] $\Im(\mu) = 0$. In this case we have an non-oscillatory mode and $\vec \xi$ and $\vec \epsilon$ can be chosen real. This case is analyzed in Lemma \ref{th:stability}.

\item[(2)] If $\Im(\mu) \neq 0$ we have a pair of complex conjugate eigenvalues and the corresponding eigenmode describes a damped or amplified oscillation. In this case we can divide equation (\ref{eq:mu-realimag}) by $\Im(\mu)$ and solve for $\Re(\mu)$ with the result
\be
   \Re(\mu) = - \frac{\langle \vec \epsilon, \vec T \vec X^{-1} \vec \epsilon \rangle  
      + \langle \vec \xi, \vec D \vec \xi \rangle}{
                           2 \langle \vec \xi, \vec M \vec \xi \rangle}.
\ee 
\end{itemize}
As all matrices $\vec T$, $\vec X$, $\vec D$ and $\vec M$ are diagonal with only positive entries we obtain that
$
    \Im(\mu) \neq 0   ~ \Rightarrow ~ \Re(\mu) < 0.
$
\end{proof}

As all oscillatory modes are damped, we can concentrate on real eigenvalues, $\mu\in\mathbb{R}$,  in the following.
% to assess the stability of an equilibrium and we find the following criterion:

\begin{lemma}
\label{th:stability}
The equilibrium $(\delta^\circ_j, \omega^\circ_j, E^\circ_j)$ is linearly stable if and only if the matrix
\be
   \vec \Xi = \begin{pmatrix}
      - \vec \Lambda & \vec A^\top \\
      \vec A & \vec H - \vec X^{-1}
      \end{pmatrix}
\ee  
is negative definite on $\tilde{\mathcal{D}}_\perp$. 
\end{lemma}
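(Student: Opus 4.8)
The plan is to reduce the problem to the real spectrum of $\vec J$ with the help of Lemma~\ref{th:complex}, to extract from the eigenvalue equations a scalar identity relating the quadratic form of $\vec\Xi$ to the eigenvalue $\mu$, and then to treat the two implications separately; the recurring difficulty is the trivial zero mode. By Lemma~\ref{th:complex} every complex eigenvalue already has strictly negative real part, so by Definition~\ref{def-stab} linear stability is equivalent to: the only real eigenvalue $\mu\ge 0$ of $\vec J$ is $\mu_1=0$, and $0$ is algebraically simple. For real $\mu$ the eigenvector has the form $(\vec\xi,\mu\vec\xi,\vec\epsilon)$ with real $\vec\xi,\vec\epsilon$, and equations~\eqref{eq:tomultiply} hold. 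Multiplying the first of \eqref{eq:tomultiply} by $\vec\xi^\top$, the second by $\vec\epsilon^\top$, adding, and using $\vec\xi^\top\vec A^\top\vec\epsilon=\vec\epsilon^\top\vec A\vec\xi$ yields the key identity
\[
  \Big\langle\begin{pmatrix}\vec\xi\\\vec\epsilon\end{pmatrix},\,\vec\Xi\begin{pmatrix}\vec\xi\\\vec\epsilon\end{pmatrix}\Big\rangle
    = \mu^2\langle\vec\xi,\vec M\vec\xi\rangle+\mu\big(\langle\vec\xi,\vec D\vec\xi\rangle+\langle\vec\epsilon,\vec X^{-1}\vec T\vec\epsilon\rangle\big),
\]
whose right-hand side is nonnegative for $\mu\ge 0$ and strictly positive whenever the eigenvector does not vanish. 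I would also record the structural identities $\vec\Lambda\vec 1=\vec A\vec 1=\vec 0$, so that $(\vec 1,\vec 0)^\top\in\ker\vec\Xi$ unconditionally; hence ``$\vec\Xi$ negative definite on $\tilde{\mathcal{D}}_\perp$'' is the same as ``$\vec\Xi$ negative semidefinite with $\ker\vec\Xi=\mathrm{span}\{(\vec 1,\vec 0)^\top\}$'', i.e.\ its quadratic form is strictly negative on every vector that is not a scalar multiple of $(\vec 1,\vec 0)^\top$.

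For ``$\Leftarrow$'' assume $\vec\Xi$ is negative definite on $\tilde{\mathcal{D}}_\perp$. A real eigenvalue $\mu>0$ would make the left-hand side of the identity strictly positive; but substituting $\vec\xi=c\vec 1,\ \vec\epsilon=\vec 0$ into the first equation of \eqref{eq:tomultiply} gives $c\mu(\vec D+\mu\vec M)\vec 1=\vec 0$, hence $c=0$, so $(\vec\xi,\vec\epsilon)$ cannot be a multiple of $(\vec 1,\vec 0)^\top$ and its quadratic form must be negative --- a contradiction. For $\mu=0$ the eigenvector satisfies $\vec\Xi(\vec\xi,\vec\epsilon)^\top=\vec 0$, so the quadratic form vanishes and $(\vec\xi,\vec\epsilon)\in\mathrm{span}\{(\vec 1,\vec 0)^\top\}$, i.e.\ the eigenvector is the trivial $(\vec 1,\vec 0,\vec 0)^\top$; and pairing a putative generalized-eigenvector relation $\vec\Xi(\vec\xi,\vec\epsilon)^\top=(\vec D\vec 1,\vec 0)^\top$ with $(\vec 1,\vec 0)^\top$ yields $0=\vec 1^\top\vec D\vec 1>0$, excluding a Jordan block. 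Thus $\mu_1=0$ is simple and all remaining eigenvalues have negative real part, so the equilibrium is linearly stable.

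For ``$\Rightarrow$'' I would argue by contraposition. If $\vec\Xi$ has a strictly positive eigenvalue, consider the symmetric matrix
\[
  \vec\Xi(\mu):=\vec\Xi-\begin{pmatrix}\mu\vec D+\mu^2\vec M & \vec 0\\ \vec 0 & \mu\vec X^{-1}\vec T\end{pmatrix},
\]
for which a real eigenvalue of $\vec J$ is precisely a real zero of $\det\vec\Xi(\mu)$, its kernel supplying the eigenvector exactly as in the computations above. Its largest eigenvalue $\lambda_{\max}(\mu)$ is continuous in $\mu$, is positive at $\mu=0$, and is negative for large $\mu$ because the subtracted positive diagonal block grows at least like $\mu c\,\eye$ for some $c>0$; the intermediate value theorem then gives $\lambda_{\max}(\mu^\star)=0$ for some $\mu^\star>0$, so $\vec J$ has a positive real eigenvalue. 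If instead $\vec\Xi$ is negative semidefinite but $\ker\vec\Xi$ is strictly larger than $\mathrm{span}\{(\vec 1,\vec 0)^\top\}$, then removing the $(\vec 1,\vec 0)^\top$-component of a kernel vector produces $(\vec\xi_0,\vec\epsilon_0)\neq\vec 0$ with $\vec 1^\top\vec\xi_0=0$ and $\vec\Xi(\vec\xi_0,\vec\epsilon_0)^\top=\vec 0$, so that $(\vec\xi_0,\vec 0,\vec\epsilon_0)^\top$ is an eigenvector of $\vec J$ for $\mu=0$ independent of $(\vec 1,\vec 0,\vec 0)^\top$; hence $0$ has geometric multiplicity at least two, $\mu_2=0$, and the equilibrium is not linearly stable. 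In both cases stability fails.

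I expect the main obstacle to be exactly this zero mode: since the right and left null vectors of $\vec J$ at $\mu=0$, namely $(\vec 1,\vec 0,\vec 0)^\top$ and $(\vec D\vec 1,\vec M\vec 1,\vec 0)^\top$, are not orthogonal, an eigenvector of $\vec J$ for a nonzero eigenvalue need not lie in $\mathcal{D}_\perp$, so one cannot simply restrict the quadratic form to $\tilde{\mathcal{D}}_\perp$. The way around this is that $(\vec 1,\vec 0)^\top$ always lies in $\ker\vec\Xi$, which makes negative definiteness on $\tilde{\mathcal{D}}_\perp$ interchangeable with negative semidefiniteness with one-dimensional kernel; once this is used, the sign of the key identity does the work, the only remaining care being the continuity/intermediate-value step that turns indefiniteness of $\vec\Xi$ into a genuine positive eigenvalue of $\vec J$.
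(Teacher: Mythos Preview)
Your argument is correct. For the ``$\Leftarrow$'' direction you and the paper proceed essentially the same way: both rewrite the real-eigenvalue equations as $\big(\vec\Xi+\vec C(\mu)\big)(\vec\xi,\vec\epsilon)^\top=\vec 0$ with $\vec C(\mu)=-\mathrm{diag}(\mu^2\vec M+\mu\vec D,\ \mu\vec X^{-1}\vec T)$ and use that $\vec C(\mu)$ is negative semidefinite for $\mu\ge 0$. You add two refinements the paper omits: the observation that $(\vec 1,\vec 0)^\top\in\ker\vec\Xi$, so that negative definiteness on $\tilde{\mathcal D}_\perp$ is equivalent to negative semidefiniteness with one-dimensional kernel (this neatly handles the fact that an eigenvector of $\vec J$ for $\mu>0$ need not lie in $\mathcal D_\perp$), and the explicit exclusion of a Jordan block at $\mu=0$, ensuring $\mu_1=0$ is algebraically simple.

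In the ``$\Rightarrow$'' direction your route is genuinely different and more complete. The paper selects an eigenvector $(\vec\xi,\vec\epsilon)^\top$ of $\vec\Xi$ for an eigenvalue $\beta\ge 0$, computes $(\vec\xi,\vec 0,\vec\epsilon)\,\vec J\,(\vec\xi,\vec 0,\vec\epsilon)^\top=\beta\,\vec\epsilon^\top\vec T^{-1}\vec X\,\vec\epsilon\ge 0$, and concludes that ``the Jacobian is not negative definite on $\mathcal D_\perp$ and the equilibrium is not linearly stable.'' That last implication is not self-evident for the non-symmetric $\vec J$: a matrix can be Hurwitz while its quadratic form is indefinite. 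You sidestep this by tracking $\lambda_{\max}$ of the \emph{symmetric} family $\vec\Xi(\mu)$, using continuity and the intermediate value theorem to exhibit an actual eigenvalue $\mu^\star>0$ of $\vec J$, and by treating the degenerate case $\ker\vec\Xi\supsetneq\mathrm{span}\{(\vec 1,\vec 0)^\top\}$ separately via a second zero eigenvector of $\vec J$. What you gain is a direct construction of the unstable eigenvalue rather than an appeal to a definiteness property of $\vec J$; the cost is only the modest continuity/IVT bookkeeping.
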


\begin{proof} By contradiction.
We first consider the case that $\vec \Xi$ is negative definite and show that this implies $\mu < 0$.
We start from \eqref{eq:tomultiply} for the eigenvalues $\mu$ and eigenvectors of $\vec J$, which is rearranged into matrix form:
\be
    (\vec \Xi + \vec C(\mu)) 
    \begin{pmatrix} \vec \xi \\ \vec \epsilon \end{pmatrix}
    = \vec 0,
    \label{eq:kernel-AM}
\ee 
where 
\be
     \vec C(\mu) =  \begin{pmatrix} 
     - \mu^2 \vec M - \mu \vec D  & 0  \\
     0  & - \mu \vec X^{-1} \vec T  \\
     \end{pmatrix} .
\ee
Now let $\vec \Xi$ be negative definite on $\tilde{\mathcal{D}}_\perp$. If we assume that $\mu \ge 0$ is an eigenvalue, the matrix $\vec C(\mu)$ is negative semi-definite. Thus the sum of the two matrices $\vec \Xi+ \vec C(\mu)$ is also negative definite such that equation (\ref{eq:kernel-AM}) cannot be satisfied. Thus an eigenvalue with $\mu \ge 0$ does not exist and the system is linearly stable.

Second, consider the case that the matrix $\vec \Xi$ is not negative definite on $\tilde{\mathcal{D}}_\perp$. We then find an eigenvalue $\beta \ge 0$ with corresponding eigenvector $(\vec \xi,\vec \epsilon)^\top$. We can assume that $\vec \epsilon \neq \vec 0$ because otherwise 

\be
\vec  \Xi \begin{pmatrix}
\vec \xi \\ \vec \epsilon
\end{pmatrix} = \beta \begin{pmatrix}
\vec \xi \\ \vec \epsilon
\end{pmatrix}
\ee
would imply that $\vec \Lambda \vec \xi = \vec 0 $ and it would mean that either $\vec \xi = \vec 0$ or $\vec \xi \propto \vec 1$. The first option is impossible because this would imply $(\vec \xi, \vec \epsilon)^\top= \vec 0 $, the second is ruled out by the assumption that $(\vec \xi, \vec \epsilon)^\top \in \tilde{\mathcal{D}}_\perp 
$.

Then we evaluate the expression
\begin{align}
    & \begin{pmatrix} \vec \xi \\ \vec 0 \\ \vec \epsilon \end{pmatrix}^\top \vec J
    \begin{pmatrix} \vec \xi \\ \vec 0 \\ \vec \epsilon \end{pmatrix} \nn \\
    & = \begin{pmatrix} \vec \xi \\ \vec 0 \\ \vec \epsilon \end{pmatrix}^\top \begin{pmatrix} \vec \eye & 0 & 0 \\ 0 & \vec M^{-1}  & 0 \\   0 & 0 & \vec T^{-1} \vec X \end{pmatrix} \begin{pmatrix}
      0 & \vec \eye & 0 \\
      -\vec \Lambda & -\vec D & \vec A^\top \\
      \vec A & 0 & \vec H - \vec X^{-1}
      \end{pmatrix}
          \begin{pmatrix} \vec \xi \\ \vec 0 \\ \vec \epsilon \end{pmatrix} \nn \\
              &=  \begin{pmatrix} \vec \xi \\ \vec 0 \\ \vec \epsilon \end{pmatrix}^\top \begin{pmatrix} \vec \eye & 0 & 0 \\ 0 & \vec M^{-1}  & 0 \\   0 & 0 & \vec T^{-1} \vec X \end{pmatrix} \beta
          \begin{pmatrix} \vec 0 \\ \vec \xi \\ \vec \epsilon \end{pmatrix} \nn \\ 
    & =  \begin{pmatrix} \vec \xi \\ \vec 0 \\ \vec \epsilon \end{pmatrix}^\top \begin{pmatrix}
    0 \\ \beta \vec M^{-1} \vec \xi \\ \beta \vec T^{-1} \vec X \vec \epsilon \end{pmatrix}  = \beta \vec \epsilon^{\top} \vec T^{-1} \vec X \vec \epsilon \ge  0,
\end{align}
whereby the last inequality follows from $\vec X, \vec T$ being positive definite.
Thus the Jacobian is not negative definite on $\mathcal{D}_\perp$ and the equilibrium is not linearly stable.
\end{proof}

\section{Routes to instability} \label{sec:instability}

We can obtain further insights into the stability properties by decomposing the system dynamics into its rotor-angle and voltage parts (cf. Refs. \onlinecite{Kun04, Vou96}). We start with the following proposition, which comes in two versions:
%\TF{\tt  Below the style switches from \textbf{an} to \textbf{the} equilibrium which is somewhat confusing and should be avoided. This can be avoided if one defines the vector for the equilibrium and uses the definite article "the"} \dirk{Ah, no, we should always write AN equilibrium. There can be many of them, see \cite{Manik16}. We should actually  add a remark about this.}
\begin{prop}[Sufficient and necessary stability conditions]
\label{cor:schur} ~\vspace*{-4mm}\\
\begin{itemize}
\item[I.]The equilibrium $(\delta^\circ_j, \omega^\circ_j, E^\circ_j)$ is linearly stable if and only if 
(a) the matrix $\vec \Lambda$ is positive definite on $\bar{\mathcal{D}}_\perp$ and
(b) the matrix $\vec H - \vec X^{-1} + \vec A \vec \Lambda^{+} \vec A^\top$ is negative definite,
where $\vec \Lambda^{+}$ is the Moore-Penrose pseudoinverse of $\vec \Lambda$.

\item[II.] The equilibrium $(\delta^\circ_j, \omega^\circ_j, E^\circ_j)$ is linearly stable if and only if
(a) the matrix $\vec H - \vec X^{-1}$ is negative definite and
(b) the matrix $\vec \Lambda + \vec A^\top (\vec H - \vec X^{-1})^{-1} \vec A$ is positive definite on $\bar{\mathcal{D}}_\perp$.
\end{itemize}
\end{prop}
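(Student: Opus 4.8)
The starting point is Lemma~\ref{th:stability}: by that lemma the equilibrium is linearly stable if and only if the symmetric matrix $\vec\Xi$ is negative definite on $\tilde{\mathcal{D}}_\perp$, that is, on all $(\vec\xi,\vec\epsilon)^\top$ with $\vec 1^\top\vec\xi=0$ and $\vec\epsilon$ unconstrained. Both versions of the proposition are then obtained by a Schur-complement (completion-of-squares) reduction of this quadratic-form condition, Version~I eliminating the angle block $-\vec\Lambda$ and Version~II eliminating the voltage block $\vec H-\vec X^{-1}$. Before carrying this out I would record two elementary consequences of the definitions \eqref{eq:def_matrices}: the row sums of $\vec\Lambda$ and of $\vec A$ vanish, so $\vec\Lambda\vec 1=\vec 0$ and $\vec A\vec 1=\vec 0$; hence $\vec 1^\top\vec A^\top\vec\epsilon=(\vec A\vec 1)^\top\vec\epsilon=0$, i.e.\ $\vec A^\top\vec\epsilon\in\bar{\mathcal{D}}_\perp$ for every $\vec\epsilon$. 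Moreover, if $\vec\Lambda$ is positive definite on $\bar{\mathcal{D}}_\perp$ then $\ker\vec\Lambda=\mathrm{span}(\vec 1)$, the restriction $\vec\Lambda|_{\bar{\mathcal{D}}_\perp}$ is an isomorphism of $\bar{\mathcal{D}}_\perp$, and $\vec\Lambda^+$ is precisely its inverse extended by zero on $\mathrm{span}(\vec 1)$; in particular $\vec\Lambda^+$ maps into $\bar{\mathcal{D}}_\perp$ and $\vec\Lambda^+\vec\Lambda\vec\Lambda^+=\vec\Lambda^+$.

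For Version~II the reduction is the cleaner one, because the $\vec\epsilon$-component is not constrained. Necessity of (a) follows by restricting the form $(\vec\xi,\vec\epsilon)^\top\vec\Xi(\vec\xi,\vec\epsilon)=-\vec\xi^\top\vec\Lambda\vec\xi+2\vec\epsilon^\top\vec A\vec\xi+\vec\epsilon^\top(\vec H-\vec X^{-1})\vec\epsilon$ to $\vec\xi=\vec 0$, which lies in $\tilde{\mathcal{D}}_\perp$: negative definiteness there forces $\vec H-\vec X^{-1}<0$, hence invertibility. Assuming (a), for each fixed $\vec\xi$ the form is concave in $\vec\epsilon$ and its maximum, attained at $\vec\epsilon^\star=-(\vec H-\vec X^{-1})^{-1}\vec A\vec\xi$, equals $-\vec\xi^\top\bigl[\vec\Lambda+\vec A^\top(\vec H-\vec X^{-1})^{-1}\vec A\bigr]\vec\xi$. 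Therefore $\vec\Xi$ is negative definite on $\tilde{\mathcal{D}}_\perp$ if and only if this maximum is negative for every $\vec\xi\in\bar{\mathcal{D}}_\perp\setminus\{\vec 0\}$, i.e.\ if and only if $\vec\Lambda+\vec A^\top(\vec H-\vec X^{-1})^{-1}\vec A$ is positive definite on $\bar{\mathcal{D}}_\perp$; together with (a) this is exactly Version~II of the claim, and Lemma~\ref{th:stability} finishes it.

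Version~I is the mirror image; the extra work is caused by $\vec\Lambda$ being singular and by the angle variable being confined to $\bar{\mathcal{D}}_\perp$. Necessity of (a): restricting the form to $\vec\epsilon=\vec 0$ gives $-\vec\xi^\top\vec\Lambda\vec\xi<0$ on $\bar{\mathcal{D}}_\perp\setminus\{\vec 0\}$, i.e.\ $\vec\Lambda>0$ on $\bar{\mathcal{D}}_\perp$. Assuming (a), I would maximise the form over $\vec\xi\in\bar{\mathcal{D}}_\perp$ for fixed $\vec\epsilon$: since both $\vec\Lambda\vec\xi$ and (by the preliminary remark) $\vec A^\top\vec\epsilon$ lie in $\bar{\mathcal{D}}_\perp$, the stationarity condition inside the subspace reduces to $\vec\Lambda\vec\xi=\vec A^\top\vec\epsilon$, whose unique solution in $\bar{\mathcal{D}}_\perp$ is $\vec\xi^\star=\vec\Lambda^+\vec A^\top\vec\epsilon$; because $-\vec\Lambda|_{\bar{\mathcal{D}}_\perp}$ is negative definite this is the maximiser, with value $\vec\epsilon^\top\bigl[\vec H-\vec X^{-1}+\vec A\vec\Lambda^+\vec A^\top\bigr]\vec\epsilon$ (using $\vec\Lambda^+\vec\Lambda\vec\Lambda^+=\vec\Lambda^+$). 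Hence, given (a), $\vec\Xi<0$ on $\tilde{\mathcal{D}}_\perp$ if and only if this value is negative for every $\vec\epsilon\neq\vec 0$, i.e.\ if and only if $\vec H-\vec X^{-1}+\vec A\vec\Lambda^+\vec A^\top<0$; combined with the necessity of (a) and Lemma~\ref{th:stability} this gives Version~I of the claim.

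The only genuinely delicate point is the Version~I step: one has to notice that here the Schur complement is taken over the \emph{constrained} block and that the naive $\vec\Lambda^{-1}$ must be replaced by the pseudoinverse $\vec\Lambda^{+}$, which is legitimate exactly because $\ker\vec\Lambda=\mathrm{span}(\vec 1)$ is orthogonal to $\bar{\mathcal{D}}_\perp$ and $\vec A^{\top}$ maps into $\bar{\mathcal{D}}_\perp$. Once the vanishing-row-sum identities $\vec\Lambda\vec 1=\vec A\vec 1=\vec 0$ are in place, the remaining manipulations are just completions of squares and pose no difficulty.
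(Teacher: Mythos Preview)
Your argument is correct and is, at bottom, the same Schur-complement reduction the paper uses, only phrased as a completion-of-squares/partial-maximisation instead of an explicit congruence. The paper writes the factorisation $\vec\Xi=\vec U^\top\operatorname{diag}\bigl(-\vec\Lambda,\;\vec H-\vec X^{-1}+\vec A\vec\Lambda^{+}\vec A^\top\bigr)\vec U$ with $\vec U=\begin{pmatrix}\eye & -\vec\Lambda^{+}\vec A^\top\\ \vec 0 & \eye\end{pmatrix}$ and then invokes an appendix lemma to show that $\vec U$ is a bijection of $\tilde{\mathcal{D}}_\perp$; your maximisation over $\vec\xi\in\bar{\mathcal{D}}_\perp$ (resp.\ over $\vec\epsilon$) reproduces exactly this change of variables, with your optimiser $\vec\xi^\star=\vec\Lambda^{+}\vec A^\top\vec\epsilon$ being the shift encoded in $\vec U$. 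Two small points where your write-up is actually tidier than the paper's: you establish the necessity of condition~(a) first by restricting to the complementary block, whereas the paper's factorisation tacitly presupposes that $\ker\vec\Lambda=\mathrm{span}(\vec 1)$ for the pseudoinverse identities to collapse to $\vec\Xi$; and you make the key algebraic input $\vec A\vec 1=\vec 0$ (hence $\vec A^\top\vec\epsilon\in\bar{\mathcal{D}}_\perp$) explicit, which in the paper is hidden inside the appendix lemma about $\vec U$ preserving $\tilde{\mathcal{D}}_\perp$.
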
 

\begin{proof}
The results follow from Lemma \ref{th:stability} using the Schur complement \cite{Zhan06}. We demonstrate the proof for Part I. The proof for criterion II is obtained analogously.

First, on $\tilde{\mathcal{D}}_\perp$, the matrix $\vec \Xi$ can be decomposed  into
\begin{align}
      & \vec \Xi = \vec{U}^\top 
          \begin{pmatrix}
      - \vec \Lambda & 0 \\
      0 & \vec H - \vec X^{-1} + \vec A \vec \Lambda^{+} \vec{A}^\top
      \end{pmatrix} \vec U ,
      \label{eq:Xi-Schur}
\end{align}
with
\begin{align}
          \vec U = \begin{pmatrix} \eye & -\vec \Lambda^{+} \vec{A}^\top  \\ 0 & \eye \end{pmatrix}. 
          \label{eq:defU}
\end{align}
We define
\begin{align}
    \vec y = \begin{pmatrix} \vec y_a & \vec y_v \end{pmatrix}^\top
     = \vec U \vec x.
    \label{eq:yPx}
\end{align}
Lemma \ref{lem_for_theorem} given in the Appendix shows that
\begin{align}
    \vec x \in \tilde{\mathcal{D}}_\perp \quad &\Leftrightarrow \quad \vec y \in \tilde{\mathcal{D}}_\perp, \nn \\
    \vec x \neq \vec 0 \quad &\Leftrightarrow \quad \vec y_a, \vec y_v \neq 0.
    \label{eq:xy-definite}
\end{align}
According to Lemma \ref{th:stability}, an equilibrium is stable if and only if
\begin{align*}
   \vec{x}^\top \vec \Xi \vec x < 0 \qquad \forall \vec x \in \tilde{\mathcal{D}}_\perp, \quad \vec x \neq 0.
\end{align*}
By the use of equations \eqref{eq:Xi-Schur} and \eqref{eq:xy-definite} the above condition is equivalent to
\begin{align*}
   - \vec y_a^\top \vec \Lambda \vec y_a  
   + \vec y_v^\top(\vec H - \vec X^{-1} + \vec A \vec \Lambda^{+} \vec{A}^\top) \vec y_v < 0 ,\nn \\
   \quad \forall \, \vec y \in \tilde{\mathcal{D}}_\perp, \quad \vec y_a,\vec y_v \neq 0.
\end{align*}
That is, the equilibrium  $(\delta^\circ_j, \omega^\circ_j, E^\circ_j)$ is stable if and only if $\vec \Lambda$ is positive definite on $\bar{\mathcal{D}}_\perp$ and
$\vec H - \vec X^{-1} + \vec A \vec \Lambda^{+} \vec{A}^\top$ is negative definite.
\end{proof}

We are especially interested in classifying possible routes to instability in the third-order model, i.e. we want to understand how a stable equilibrium can be lost if the system parameters are varied. Proposition \ref{cor:schur}  provides the basis for this task as it shows where instabilities emerge. Depending on which of the criteria in the corollary is violated, the instability can be attributed to the angular or the voltage subsystem or both. This analysis will furthermore reveal stabilizing factors of the third-order model. 

\subsection{Rotor-angle stability}
\label{sec:angle-stab}

An equilibrium becomes unstable when the Condition I(a) in Proposition \ref{cor:schur} is violated, i.e. when the matrix $\vec \Lambda$ is no longer positive definite on $\bar{\mathcal{D}}_\perp$. This corresponds to a pure angular instability which can be seen as follows. If we artificially fix the voltages of all nodes we arrive at a second-order model, commonly referred to as classical model \cite{Mach08}, structure-preserving model \cite{Berg81} or oscillator model \cite{Fila08,12powergrid} in the literature. In this case we have $\vec \epsilon \equiv 0$ and the linearized equations of motion (\ref{eq:eom-lin}) reduce to      
\[
   %\frac{d}{dt} \!
    \begin{pmatrix} \dot{\vec \xi} \\ \dot{\vec \nu} \\  \end{pmatrix} \! = \!
     \begin{pmatrix} \vec 0 & \eye &  \\
                          -  \vec M^{-1} \vec \Lambda & - \vec M ^{-1} \vec D  \\
                 \end{pmatrix}    
    \!\! \begin{pmatrix} \vec \xi \\ \vec \nu \\  \end{pmatrix}                           
\]
This system is  stable if and only if $\vec \Lambda$ is positive definite (cf. Ref. \onlinecite{14bifurcation}). Thus 
Condition I(a) includes a condition for the stability of the rotor-angle subsystem alone.

For a system of two machines $j=1,2$ connected by one transmission line this stability criterion is easily understood. The matrix $\vec \Lambda$ is positive definite on $\bar{\mathcal{D}}_\perp$ if and only if the phase difference satisfies $\cos(\delta_2^{\circ} - \delta_1^{\circ})>0$. This is a well-known criterion for stability extensively discussed in the literature \cite{Mach08, Sauer98, Chen20161}. Stability is lost if the real power transmitted between the two nodes,
\[
     P_{1} =  E_1^{\circ} E_2^{\circ} B_{1,2} \sin(\delta_1 ^{\circ}- \delta_2^{\circ} ),
\]
is strongly increased. Typically, the phase difference $\delta_1 ^{\circ}- \delta_2^{\circ}$ grows with $P_{1}$ until it reaches $\pi/2$ where the grid becomes unstable. 

An example of such a pure angle instability is shown in Fig.~\ref{fig:ins-freq}. For $X_j - X_j' = 0$ the voltages are fixed as $E_j^\circ = E_j^{f}$ such that we can restrict our analysis to the stability of the rotor-angle subsystem. Stable operation is possible as long as $P_{1} < P_{\rm max} = B_{1,2} E_1^{f} E_2^{f}$ as elaborated above. As $P_{1} \rightarrow P_{\rm max}$, the second eigenvalue of the Laplacian $\vec \Lambda$ approaches zero and the stable equilibrium is lost in a saddle-node bifurcation as studied in detail in Refs. \onlinecite{14bifurcation,Jinp16}.

 \begin{figure}[tb]
\centering
\includegraphics[width=\columnwidth,trim= 1cm 0.5cm 2cm 1.5cm, clip]{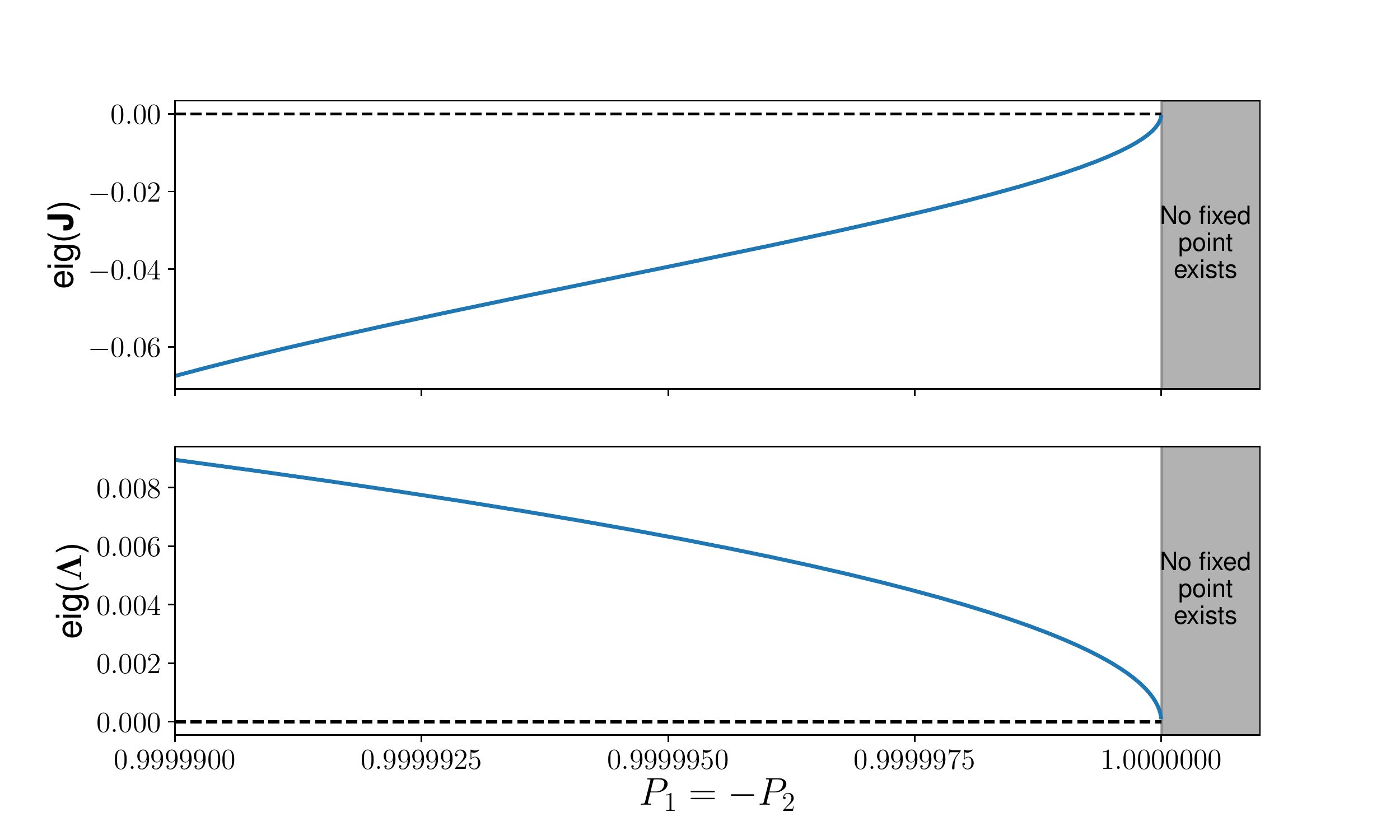}
\caption{
\label{fig:ins-freq}
Example of a pure angle instability in a system of two coupled machines. When the transmitted power $P_1 = - P_2$ is increased towards a critical value $P_{\rm max}$, one eigenvalue of the Laplacian matrix $\vec \Lambda$ (shown on the lower panel) approaches zero. At once one eigenvalue of the Jacobian $\vec J$ of the full dynamical system approaches zero and stability is lost as described by Proposition \ref{cor:schur}. At the critical value the equilibrium is lost in an inverse saddle-node bifurcation, corresponding to the border of the regions I and II in the stability map in Fig.~\ref{fig:stab-map}.
The remaining parameters are $X_1 - X'_1 = X_2 -X'_2 =0$, $B_{1,1} = B_{2,2} = -0.8$, $B_{1,2} = B_{2,1} = 1.0$, $E^{f}_1 = E_2^{f} = 1.0, D = 0.2, M = 1, T = 2$ in per unit system.
}
\end{figure}

This result can be generalized to larger networks in the following way. We assume that the network is connected, otherwise we can consider every connected component separately. A transmission line $(i,j)$ is called a `critical line' if the angle difference across $(i,j)$ satisfies
\begin{equation*}
\dfrac{\pi}{2} \le |\theta_i - \theta_j|\mod 2\pi \le \dfrac{3\pi}{2} ,
\end{equation*}
otherwise it is called a non-critical line. We then have the following corollary.

\begin{corr}
\label{cor:laplacian-pos}
If there is no critical line in the network, then the matrix $\vec \Lambda$ is positive definite on $\bar{\mathcal{D}}_\perp$.
\end{corr}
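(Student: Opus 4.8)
The plan is to recognise $\vec\Lambda$ as a weighted graph Laplacian and then invoke the standard spectral–graph–theory fact that such a matrix is positive semidefinite with kernel exactly $\mathrm{span}\{\vec 1\}$ whenever the underlying weighted graph is connected with strictly positive weights.

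First I would rewrite the entries of $\vec\Lambda$ from \eqref{eq:def_matrices} in terms of the edge weights
\[
  w_{j,\ell} := E_j^{\circ} E_\ell^{\circ} B_{j,\ell}\cos(\delta_\ell^{\circ}-\delta_j^{\circ}), \qquad j\neq\ell.
\]
Because $B_{j,\ell}=B_{\ell,j}$ and the cosine is even, $\vec\Lambda$ is symmetric, its off-diagonal entries are $\Lambda_{j,\ell}=-w_{j,\ell}$, and its diagonal entries are $\Lambda_{j,j}=\sum_{k\neq j}w_{j,k}$; in particular the shunt susceptances $B_{j,j}$ never enter, so $\vec\Lambda$ has vanishing row sums. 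Consequently, for every $\vec\xi\in\mathbb{R}^N$ one has the quadratic-form identity
\[
  \vec\xi^\top\vec\Lambda\vec\xi \;=\; \tfrac12\sum_{j\neq\ell} w_{j,\ell}\,(\xi_j-\xi_\ell)^2 .
\]

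Second, I would translate the hypothesis "no critical line" into a sign condition on the weights. A transmission line $(i,j)$ is non-critical precisely when $\cos(\delta_i^{\circ}-\delta_j^{\circ})>0$, since the closed interval $[\pi/2,\,3\pi/2]$ appearing in the definition of a critical line is exactly the set where the cosine is $\le 0$. Using that the equilibrium voltage magnitudes are strictly positive, $E_j^{\circ}>0$, and that the susceptance $B_{j,\ell}>0$ on every edge of the (connected) network graph, the absence of any critical line gives $w_{j,\ell}>0$ for every edge.

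Third, I would conclude. From the identity above, $\vec\xi^\top\vec\Lambda\vec\xi\ge 0$, so $\vec\Lambda$ is positive semidefinite; and $\vec\xi^\top\vec\Lambda\vec\xi=0$ forces $\xi_j=\xi_\ell$ across every edge with $w_{j,\ell}>0$, hence — by connectedness of the network — $\vec\xi$ is constant, i.e.\ $\vec\xi\in\mathrm{span}\{\vec 1\}$. Since $\bar{\mathcal{D}}_\perp=\{\vec\xi\in\mathbb{R}^N : \vec 1^\top\vec\xi=0\}$ intersects $\mathrm{span}\{\vec 1\}$ only in $\vec 0$, the form is strictly positive on $\bar{\mathcal{D}}_\perp\setminus\{\vec 0\}$, which is the assertion. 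I do not expect a genuine obstacle: the core argument is the textbook weighted-Laplacian fact. The only points that need care are bookkeeping — verifying that the shunt terms $B_{j,j}$ drop out of $\vec\Lambda$, that the modulo-$2\pi$ "critical line" condition is exactly $\cos(\delta_i^{\circ}-\delta_j^{\circ})\le 0$ on that line, and that "connected network" is understood with respect to the edges carrying $B_{j,\ell}>0$, so that connectivity of the weighted graph is indeed guaranteed.
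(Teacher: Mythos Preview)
Your proposal is correct and takes essentially the same approach as the paper: identify $\vec\Lambda$ as a weighted graph Laplacian with edge weights $w_{j,\ell}=E_j^{\circ}E_\ell^{\circ}B_{j,\ell}\cos(\delta_\ell^{\circ}-\delta_j^{\circ})$, observe that the absence of critical lines makes all weights strictly positive, and invoke the standard fact that a connected positively-weighted Laplacian is positive semidefinite with kernel $\mathrm{span}\{\vec 1\}$. The paper merely cites this spectral-graph-theory fact, whereas you spell out its proof via the quadratic-form identity $\vec\xi^\top\vec\Lambda\vec\xi=\tfrac12\sum_{j\neq\ell}w_{j,\ell}(\xi_j-\xi_\ell)^2$; the bookkeeping points you flag (shunts drop out, connectedness is assumed) are all handled consistently with the paper.
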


This result follows from the fact that $\vec \Lambda$ is a Laplacian matrix of a graph with weights $w_{ij} = E_i^{\circ} E_j^{\circ} B_{i,j} \cos(\delta_j^{\circ} - \delta_i^{\circ})$. If no critical line exists, $\vec \Lambda$
is a conventional Laplacian, which is extensively studied in the literature \cite{Merris94,Fied73, Godsil01}. In particular, all eigenvalues are positive except for one zero eigenvalue corresponding to the eigenvector $\vec 1$, i.e. a global shift of the voltage phase angles which has no physical significance as discussed above.

The general case of a signed Laplacian with possibly negative weights $w_{ij}$ has been studied in detail only recently \cite{14bifurcation,Chen20161,Chen20162,Song15}. Typically only few critical lines are present such that it is insightful to analyze how stability depends on the properties of these edges \cite{Song15}. One can draw an analogy to resistor networks, where $1/w_{ij}$ is taken as the direct resistance of each edge in the network. The effective resistance between a pair of nodes is then defined as the voltage drop inducing a unit current between the respective nodes. 

If only one critical line is present, its direct resistance is negative. However, the Laplacian matrix remains positive semidefinite as long as the effective resistance of the critical line remains positive \cite{Zelazo14}.

This result has been generalized to the case of several critical lines as follows. Let $G_-$ be the subgraph containing all vertices but only the critical edges and let $F_-$ be a spanning forest of this subgraph. We denote the node-edge incidence matrix of the spanning forest as $\vec  D_-$. Then the matrix
\[
   \vec \Gamma_- = \vec  D_- \vec \Lambda^+  \vec D_-^\top
\]
is an effective resistance matrix: the diagonal elements are the effective resistances of the respective edges and the non-diagonal elements are mutual effective resistances. One then finds:
\begin{corr}
\label{cor:laplacian-pos2}
The signed Laplacian $\vec \Lambda$ is positive definite on $\bar{\mathcal{D}}_\perp$ if and only if the matrix $\vec \Gamma_- $ is positive definite\cite{Chen20162}.
\end{corr}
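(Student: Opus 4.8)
The plan is to prove Corollary~\ref{cor:laplacian-pos2} by first reducing the positive definiteness of the large signed Laplacian $\vec \Lambda$ on $\bar{\mathcal{D}}_\perp$ to a condition on an $r\times r$ matrix via a Schur complement, and then identifying that small condition with the positive definiteness of $\vec \Gamma_-$ through the Woodbury matrix identity. Throughout I assume, as is implicit in this section, that the non-critical subgraph $G_+$ is connected; otherwise a vector that is constant on each component of $G_+$ but not globally constant already makes $\vec x^\top \vec \Lambda \vec x \le 0$ on $\bar{\mathcal{D}}_\perp$, and a short separate check then shows $\vec \Gamma_-$ is not positive definite either, so the equivalence holds trivially.

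First I split $\vec \Lambda = \vec L_+ - \vec L_-$, where $\vec L_+ \succeq 0$ is the ordinary positive-weight Laplacian of $G_+$ and $\vec L_- = \sum_{(i,j)\ \mathrm{critical}} |w_{ij}|\,(\vec e_i - \vec e_j)(\vec e_i - \vec e_j)^\top \succeq 0$. Since the incidence columns of a spanning forest already span the cut space of a graph, one can write $\vec L_- = \vec D_-^\top \vec M \vec D_-$ with a symmetric positive definite $\vec M \in \mathbb{R}^{r\times r}$, $r = \operatorname{rank}\vec L_- = |E(F_-)|$; when $G_-$ is itself a forest this is just $\vec M = \operatorname{diag}(|w_{ij}|)$, and in general $\vec M$ absorbs the fundamental-cycle relations. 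To avoid pseudoinverse bookkeeping I change coordinates to an orthonormal basis $\vec V \in \mathbb{R}^{N\times(N-1)}$ of $\bar{\mathcal{D}}_\perp$, so that ``$\vec \Lambda$ positive definite on $\bar{\mathcal{D}}_\perp$'' becomes ``$\hat{\vec \Lambda} := \vec V^\top \vec \Lambda \vec V = \hat{\vec L}_+ - \hat{\vec R}\,\vec M\,\hat{\vec R}^\top \succ 0$'' with $\hat{\vec L}_+ = \vec V^\top \vec L_+ \vec V \succ 0$ and $\hat{\vec R} = \vec V^\top \vec D_-^\top$ of full column rank $r$; and, whenever $\hat{\vec \Lambda}$ is invertible, $\vec \Lambda^+$ corresponds to $\hat{\vec \Lambda}^{-1}$ and $\vec \Gamma_- = \hat{\vec R}^\top \hat{\vec \Lambda}^{-1} \hat{\vec R}$.

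The remainder is a short piece of matrix algebra built on two facts. Put $\vec P := \hat{\vec R}^\top \hat{\vec L}_+^{-1} \hat{\vec R} \succ 0$, which is exactly the matrix of (positive) effective resistances of the forest edges within $G_+$. First, a Schur-complement computation on the block matrix $\left(\begin{smallmatrix} \hat{\vec L}_+ & \hat{\vec R} \\ \hat{\vec R}^\top & \vec M^{-1}\end{smallmatrix}\right)$ — whose two Schur complements are $\hat{\vec \Lambda}$ and $\vec M^{-1} - \vec P$ — together with $\hat{\vec L}_+,\vec M^{-1}\succ 0$ and antitonicity of inversion, gives $\hat{\vec \Lambda}\succ 0 \Leftrightarrow \vec P^{-1} - \vec M\succ 0$. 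Second, applying the Woodbury identity to $\hat{\vec \Lambda} = \hat{\vec L}_+ - \hat{\vec R}\vec M\hat{\vec R}^\top$ and inverting once more collapses the effective-resistance matrix of the full network to the clean relation $\vec \Gamma_-^{-1} = \vec P^{-1} - \vec M$ (valid whenever $\hat{\vec \Lambda}$ is nonsingular). Combining, $\vec \Gamma_-\succ 0 \Leftrightarrow \vec P^{-1}-\vec M\succ 0 \Leftrightarrow \vec \Lambda\succ 0$ on $\bar{\mathcal{D}}_\perp$. The degenerate case where $\vec \Lambda$ is positive semidefinite but singular on $\bar{\mathcal{D}}_\perp$ (no Woodbury available) is covered by a division-free form of the first fact — minimizing $\vec x^\top\hat{\vec L}_+\vec x$ under $\hat{\vec R}^\top\vec x = \vec u$ returns $\vec u^\top\vec P^{-1}\vec u$ — together with a limiting argument.

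I expect the genuine difficulty to lie not in the algebra but in two bookkeeping points: (i) establishing $\vec L_- = \vec D_-^\top \vec M \vec D_-$ with $\vec M\succ 0$, i.e. checking that restricting the critical edge set to a spanning forest preserves $\operatorname{range}\vec L_-$ and produces an honestly positive definite reduced weight matrix; and (ii) the consistent handling of the Moore--Penrose pseudoinverse $\vec \Lambda^+$ restricted to $\bar{\mathcal{D}}_\perp$, together with the degenerate cases and the implicit connectivity hypothesis on $G_+$, where the equivalence is genuinely delicate even though each individual verification is short.
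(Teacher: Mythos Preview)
The paper does not actually prove this corollary: it is stated as a known result with a citation to Chen et al.\ (Ref.~\onlinecite{Chen20162}) and no argument is given. Your proposal therefore cannot be compared to a ``paper's proof'' because there is none; what you have supplied is an independent derivation.

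That said, your route is sound and is essentially the standard one for this class of results. The decomposition $\vec \Lambda = \vec L_+ - \vec D_-^\top \vec M \vec D_-$ with $\vec M\succ 0$ is correct once you use that the rows of the full critical incidence matrix lie in the row space of the forest incidence $\vec D_-$; passing to an orthonormal basis of $\bar{\mathcal{D}}_\perp$ cleanly turns the pseudoinverse into an ordinary inverse; the Schur-complement equivalence $\hat{\vec\Lambda}\succ 0 \Leftrightarrow \vec M^{-1}-\vec P\succ 0 \Leftrightarrow \vec P^{-1}-\vec M\succ 0$ is exactly right; and the Woodbury computation collapsing $\hat{\vec R}^\top\hat{\vec\Lambda}^{-1}\hat{\vec R}$ to $(\vec P^{-1}-\vec M)^{-1}$ is a one-line identity. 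The two caveats you flag---connectivity of the non-critical subgraph $G_+$ (needed for $\hat{\vec L}_+\succ 0$) and the degenerate/semidefinite boundary case---are genuine and are also the places where the cited literature spends its effort; your limiting/variational workaround for the singular case is adequate for a proof sketch but would need to be written out carefully in a full proof. In short: the proposal is correct, and it goes beyond what the present paper provides.
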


\subsection{Voltage stability}

Condition II (a) in Proposition \ref{cor:schur} describes the stability of the voltage subsystem alone. To see this, we artificially fix the phase angles and frequencies, i.e. we set $\vec \xi \equiv \vec \nu \equiv 0$ and consider only perturbations of the voltages $\vec \epsilon$.
The linearized equations of motion (\ref{eq:eom-lin}) then reduce to 
\[
%   \frac{d}{dt} 
   \dot{ \vec \epsilon} =  \vec{T}^{-1} \vec X (\vec H- \vec X^{-1}) \vec \epsilon .
\]
This system is stable if and only if the matrix $\vec H - \vec X^{-1}$ is negative definite. If this condition is violated we thus face a pure voltage instability. Rotor angles and frequencies and phase angles will be affected by such an instability, too, but the origin lies in the voltage subsystem. % Discuss 

To obtain a deeper understanding of this condition, we first consider a system of only two machines $j = 1, 2$ connected by one transmission line. Then we have
\[
    \vec H - \vec X^{-1} = 
    \begin{pmatrix}
    B_{1,1} - (X_{1} - X_{1}')^{-1} & B_{1,2}\cos(\delta_{1}^{\circ} - \delta_{2}^{\circ}) \\
    B_{2,1}\cos(\delta_{2}^{\circ} - \delta_{1}^{\circ}) & B_{2,2} - (X_{2} - X_{2}')^{-1} \\
    \end{pmatrix}.
\]
The cosine is positive since otherwise the rotor-angle instability would arise; $X_{j} - X_{j}'$ and $B_{j,\ell}$ are positive for $j\neq\ell$ while $B_{j,j} \le 0$. Applying Silvester's criterion we find that $\vec H - \vec X^{-1}$ is negative definite if and only if
\begin{align}
  &\left(B_{1,1} - (X_{1} - X_{1}')^{-1} \right)
  \left(B_{2,2} - (X_{2} - X_{2}')^{-1} \right) \nn \\
  & \qquad >  B_{1,2} B_{2,1} \cos^2(\delta_{1}^{\circ} - \delta_{2}^{\circ}).
  \label{eq:HXnegdef}
\end{align}
For applications it is desirable to find sufficient stability conditions which depend purely on the machine parameters and not on the state variables. Using the bound $\cos^2(\delta_{1}^{\circ} - \delta_{2}^{\circ}) \le1$, a sufficient condition for voltage stability is obtained:
\begin{align*}
  \left(B_{1,1} - (X_{1} - X_{1}')^{-1} \right)
  \left(B_{2,2} - (X_{2} - X_{2}')^{-1} \right) >  B_{1,2} B_{2,1}. \nn
\end{align*}
This condition is satisfied if 
\begin{align}
   (X_1 -X_1') &< (B_{1,1} + B_{1,2})^{-1} \qquad \mbox{and} \nn \\
   (X_2 -X_2') &< (B_{2,1} + B_{2,2})^{-1}.
   \label{eq:volt2-suff2}
\end{align}
Thus, voltage stability is threatened if the difference between the static and transient reactances  $(X_j-X_j')$ becomes too large. An example for this route to instability is shown in Fig.~\ref{fig:bifI-III}.

\begin{figure}[tb]
\centering
\includegraphics[width=1\columnwidth,trim=0.5cm 1.2cm 2cm 2cm,clip]{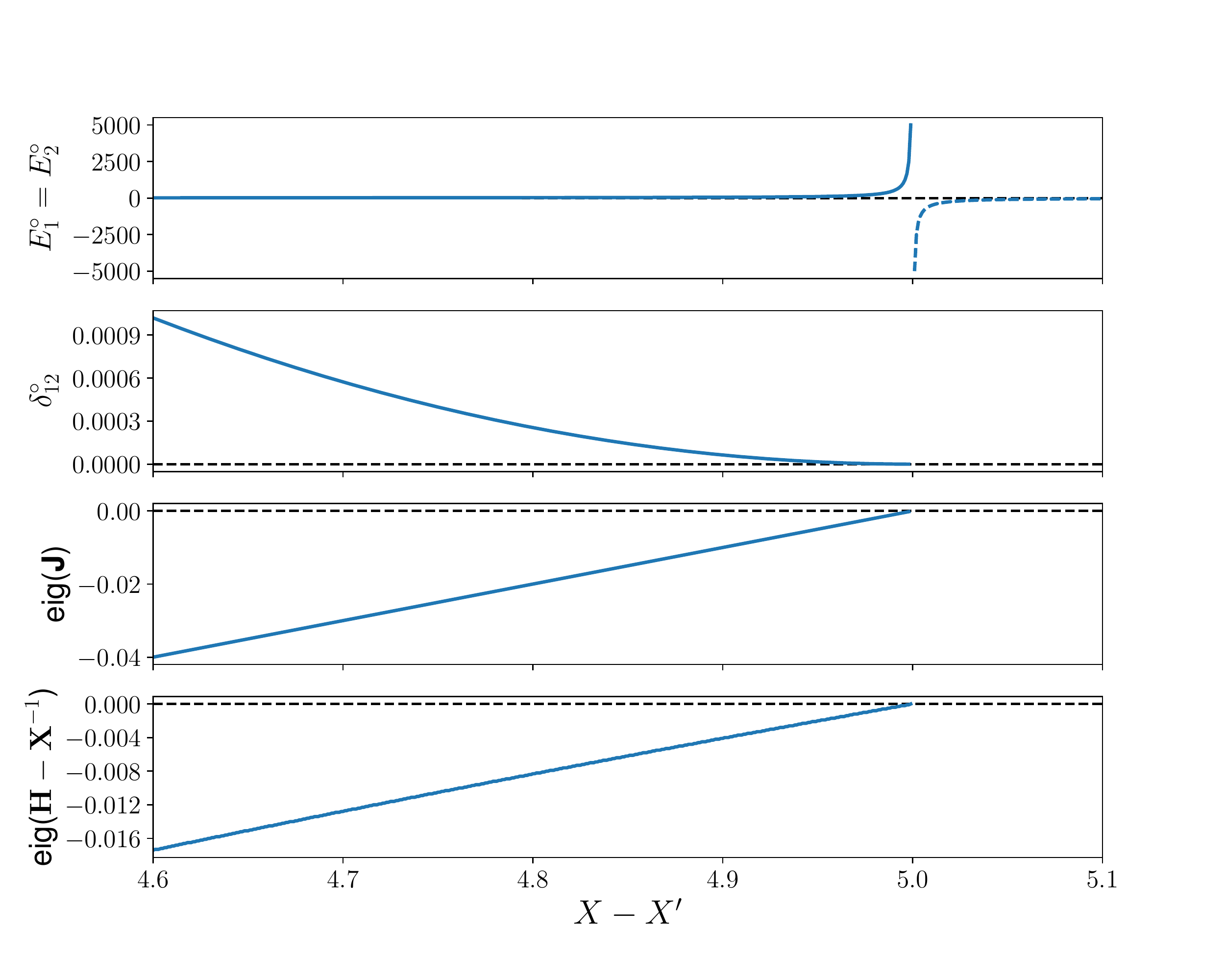}
\caption{
\label{fig:bifI-III}
Example of a voltage instability in a system of two coupled identical machines. As the parameter $X-X'$ approaches the critical value given by Eq.~(\ref{eq:HXneg-example}), the nodal voltages tend to infinity and the eigenvalues of matrices $\vec H - \vec X^{-1}$ and $\vec{J}$ tend to 0. The instability corresponds to the border of the regions I and III in the global stability map in Fig.~\ref{fig:stab-map}.
The remaining parameters are $P_1 =-P_2=0.5$, $B_{1,1} = B_{2,2} = -0.8$, $B_{1,2} = B_{2,1} = 1.0$, $E^{f}_1 = E_2^{f} = 1.0, D = 0.2, M = 1, T = 2$ in per unit system.
}
\end{figure}

In more detail, we consider two identical machines running idle, i.e. no power is transmitted $P_1 = P_2 = 0$. In this case the sufficient conditions 
(\ref{eq:volt2-suff2}) for stability are also necessary conditions and the stable equilibrium is lost at the critical value
\begin{align}
      (X-X')_{\rm crit} = (B_{1,1} + B_{1,2})^{-1}.
  \label{eq:HXneg-example}
\end{align}
When $X-X'$ is increased above this critical value, an eigenvalue of the matrix $\vec H - \vec X^{-1}$ crosses zero. Then also an eigenvalue of the Jacobian $\vec J$ of the full dynamical system crosses zero and the equilibrium becomes unstable, as described by Proposition \ref{cor:schur}. 

This result can be generalized to  larger systems comprised of more than two machines. Next, we present a sufficient and a necessary conditions for the stability of the voltage subsystem alone. Again, we will support our proposal for small values of $(X_{j} - X_{j}')$ for ensuring voltage stability.

\begin{corr}
\label{cor:voltage1}
If, for all nodes $j=1,\ldots,N$, 
\[ 
    (X_j - X_j')^{-1} > \sum_{\ell=1}^N B_{j,\ell},
\]
then the matrix $\vec H - \vec X^{-1}$ is negative definite.
\end{corr}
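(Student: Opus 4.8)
The plan is to prove the equivalent statement that $\vec X^{-1} - \vec H$ is positive definite, and to do this by establishing strict diagonal dominance and then invoking the Gershgorin circle theorem. First I would record two structural facts. The matrix $\vec H$ is symmetric: since $B_{j,\ell} = B_{\ell,j}$ and the cosine is even, $H_{j,\ell} = B_{j,\ell}\cos(\delta_\ell^\circ - \delta_j^\circ) = H_{\ell,j}$. And $\vec X^{-1}$ is a diagonal matrix with strictly positive entries $(X_j - X_j')^{-1}$. Hence $\vec S := \vec X^{-1} - \vec H$ is real and symmetric, so it suffices to show that every eigenvalue of $\vec S$ is strictly positive.

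Next I would examine the rows of $\vec S$. Using $\cos 0 = 1$, the $j$th diagonal entry is $(X_j - X_j')^{-1} - B_{j,j}$, which is at least $(X_j - X_j')^{-1} > 0$ because the shunt susceptance satisfies $B_{j,j} \le 0$. The off-diagonal entries in row $j$ are $-B_{j,\ell}\cos(\delta_\ell^\circ - \delta_j^\circ)$ for $\ell \neq j$; since $B_{j,\ell} \ge 0$ and $|\cos|\le 1$, the sum of their absolute values is at most $\sum_{\ell\neq j} B_{j,\ell}$. Subtracting, the diagonal entry minus the sum of the absolute values of the off-diagonal entries in row $j$ is bounded below by $(X_j - X_j')^{-1} - B_{j,j} - \sum_{\ell\neq j} B_{j,\ell} = (X_j - X_j')^{-1} - \sum_{\ell=1}^N B_{j,\ell}$, which is strictly positive precisely by the hypothesis of the corollary.

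Therefore $\vec S$ is strictly diagonally dominant with positive diagonal entries. By the Gershgorin circle theorem every eigenvalue of $\vec S$ lies in a disc centred at a positive real number with radius strictly smaller than that number, hence has strictly positive real part; since $\vec S$ is symmetric its eigenvalues are real, hence strictly positive. Consequently $\vec S = \vec X^{-1} - \vec H$ is positive definite and $\vec H - \vec X^{-1}$ is negative definite, which is the claim. If one prefers not to invoke Gershgorin, the same estimate can be carried out directly on the quadratic form $\vec v^\top \vec S \vec v$ via $2|v_j v_\ell| \le v_j^2 + v_\ell^2$, which reproduces the diagonal-dominance inequality term by term.

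I do not anticipate any genuine obstacle. The only points requiring care are the sign conventions, namely that $B_{j,j}\le 0$ so that $-B_{j,j}$ only helps the diagonal, and that $B_{j,\ell}\ge 0$ for $\ell\neq j$ so that the bound $|H_{j,\ell}|\le B_{j,\ell}$ holds in the right direction, together with remembering that for a symmetric matrix the Gershgorin conclusion upgrades from ``positive real part'' to ``positive eigenvalue''.
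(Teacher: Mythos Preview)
Your proof is correct and takes essentially the same approach as the paper: apply Gershgorin's circle theorem to the (symmetric) matrix, bound the off-diagonal entries via $|\cos|\le 1$ and $B_{j,\ell}\ge 0$ for $\ell\neq j$, and conclude definiteness from the hypothesis. The only cosmetic difference is that you work with $\vec X^{-1}-\vec H$ and positive definiteness whereas the paper works directly with $\vec H-\vec X^{-1}$ and negative definiteness; your version is in fact slightly cleaner in that you state the Gershgorin radius as $\sum_{\ell\neq j}|H_{j,\ell}|$ rather than $|\sum_{\ell\neq j}H_{j,\ell}|$, and you explicitly use symmetry to pass from ``real part positive'' to ``eigenvalue positive''.
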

\begin{proof}
By applying Gershgorin's circle theorem \cite{Gers31} to the general form of the matrix $\vec H - \vec X^{-1}$, the following condition for eigenvalues $\lambda_j$ is obtained:
\[
|\lambda_{j} - (B_{j,j} - (X_j - X_j')^{-1})| \leq |\sum_{\ell \neq j}^N B_{j,\ell} \cos{(\delta_{j}^{\circ} - \delta_{l}^{\circ})}|,
\]
with the radius of the disk $\sum_{\ell \neq j}^N B_{j,\ell} \cos{(\delta_{j}^{\circ} - \delta_{l}^{\circ})}$ centered at $B_{j,j} - (X_j - X_j')^{-1}$. The matrix is negative definite if and only if all the eigenvalues lie in the left half of the complex plane, which is guaranteed if
\[
B_{j,j} - (X_j - X_j')^{-1}+\sum_{\ell \neq j}^N B_{j,\ell} \cos{(\delta_{j}^{\circ} - \delta_{l}^{\circ})} < 0.
\]
Using the bound $\cos{(\delta_{j}^{\circ} - \delta_{l}^{\circ})} \le 1$, a sufficient condition for negative definiteness is obtained as
\begin{multline*}
    B_{j,j} - (X_j - X_j')^{-1}+\sum_{\ell \neq j}^N B_{j,\ell} < 0 \\%\nn  \\
    \Leftrightarrow  (X_j - X_j')^{-1} > \sum_{\ell =1}^N B_{j,\ell} \, .
\end{multline*}
This concludes the proof. \end{proof}

\begin{corr}
\label{cor:voltage2}
If for any subset of nodes $\mathcal{S} \subset \{1,2,\ldots,N\}$,
\be
    \sum_{j \in \mathcal{S}} (X_j - X_j')^{-1} \le \sum_{j,\ell \in \mathcal{S}} 
                     B_{j,\ell} \cos(\delta_\ell^\circ - \delta_j^\circ),
\ee
then the matrix $\vec H - \vec X^{-1}$ is not negative definite and the equilibrium is linearly unstable.
\end{corr}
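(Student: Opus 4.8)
The plan is to disprove negative definiteness of $\vec H - \vec X^{-1}$ by exhibiting one explicit direction on which the associated quadratic form is nonnegative, namely the indicator vector of the subset $\mathcal{S}$. Concretely, define $\vec v \in \mathbb{R}^N$ by $v_j = 1$ for $j \in \mathcal{S}$ and $v_j = 0$ otherwise. Since $\vec X^{-1}$ is the diagonal matrix with entries $(X_j - X_j')^{-1}$ and $H_{j,\ell} = B_{j,\ell}\cos(\delta_\ell^\circ - \delta_j^\circ)$, a direct computation gives
\[
  \vec v^\top (\vec H - \vec X^{-1}) \vec v
   = \sum_{j,\ell \in \mathcal{S}} B_{j,\ell}\cos(\delta_\ell^\circ - \delta_j^\circ)
     - \sum_{j \in \mathcal{S}} (X_j - X_j')^{-1} \ge 0,
\]
where the inequality is exactly the hypothesis of the corollary. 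As $\vec v \neq \vec 0$, this already shows that $\vec H - \vec X^{-1}$ is not negative definite.

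It then remains to turn the failure of negative definiteness of $\vec H - \vec X^{-1}$ into linear instability of the equilibrium. The cleanest route is via Lemma \ref{th:stability}: consider the vector $(\vec \xi, \vec \epsilon)^\top = (\vec 0, \vec v)^\top \in \mathbb{R}^{2N}$. It satisfies $\vec 1^\top \vec \xi = 0$, hence lies in $\tilde{\mathcal{D}}_\perp$, and it is nonzero. Evaluating the quadratic form of $\vec \Xi$ on it yields $\vec v^\top (\vec H - \vec X^{-1})\vec v \ge 0$, so $\vec \Xi$ is not negative definite on $\tilde{\mathcal{D}}_\perp$. By Lemma \ref{th:stability} the equilibrium is therefore not linearly stable, i.e. it is linearly unstable. (One could alternatively invoke Part~II of Proposition \ref{cor:schur}, whose Condition~II(a) is precisely negative definiteness of $\vec H - \vec X^{-1}$ and is now violated; the argument through Lemma \ref{th:stability} is preferable since it does not require $\vec H - \vec X^{-1}$ to be invertible.)

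There is essentially no obstacle here — the statement is an immediate consequence of testing the quadratic form on the correct direction. The only point requiring a little care is the bookkeeping between the nested spaces $\mathcal{D}_\perp$, $\tilde{\mathcal{D}}_\perp$, $\bar{\mathcal{D}}_\perp$: one must note that the test vector has vanishing angle component, so the constraint $\vec 1^\top \vec \xi = 0$ holds trivially and the vector genuinely belongs to $\tilde{\mathcal{D}}_\perp$, which is what makes Lemma \ref{th:stability} applicable. It is also worth remarking that this corollary is the natural instability-side companion of Corollary \ref{cor:voltage1}: choosing $\mathcal{S} = \{j\}$ a singleton recovers a condition of the form $(X_j - X_j')^{-1} \le B_{j,j}$, complementary to the sufficient stability condition stated there.
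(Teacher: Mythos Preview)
Your proof is correct and follows exactly the paper's approach: evaluating the quadratic form $\vec x^\top(\vec H-\vec X^{-1})\vec x$ on the indicator vector of $\mathcal{S}$. In fact you give more detail than the paper, which states only the trial-vector step and leaves the passage from ``$\vec H-\vec X^{-1}$ not negative definite'' to ``equilibrium unstable'' implicit; your embedding $(\vec 0,\vec v)^\top\in\tilde{\mathcal D}_\perp$ together with Lemma~\ref{th:stability} makes that step explicit and clean.
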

\begin{proof}
This result follows from evaluating the expression $\vec x^\top (\vec H - \vec X^{-1}) \vec x$ for a trial vector $\vec x \in \mathbb{R}^N$ with entries $x_j = 1\; \forall j \in \mathcal{S}$ and $x_j = 0\; \forall j \notin \mathcal{S}$. 
\end{proof}

\subsection{Mixed instability}

\begin{figure}[tb]
\centering
\includegraphics[width=1\columnwidth, trim= 0.5cm 1.5cm 2cm 3cm, clip]{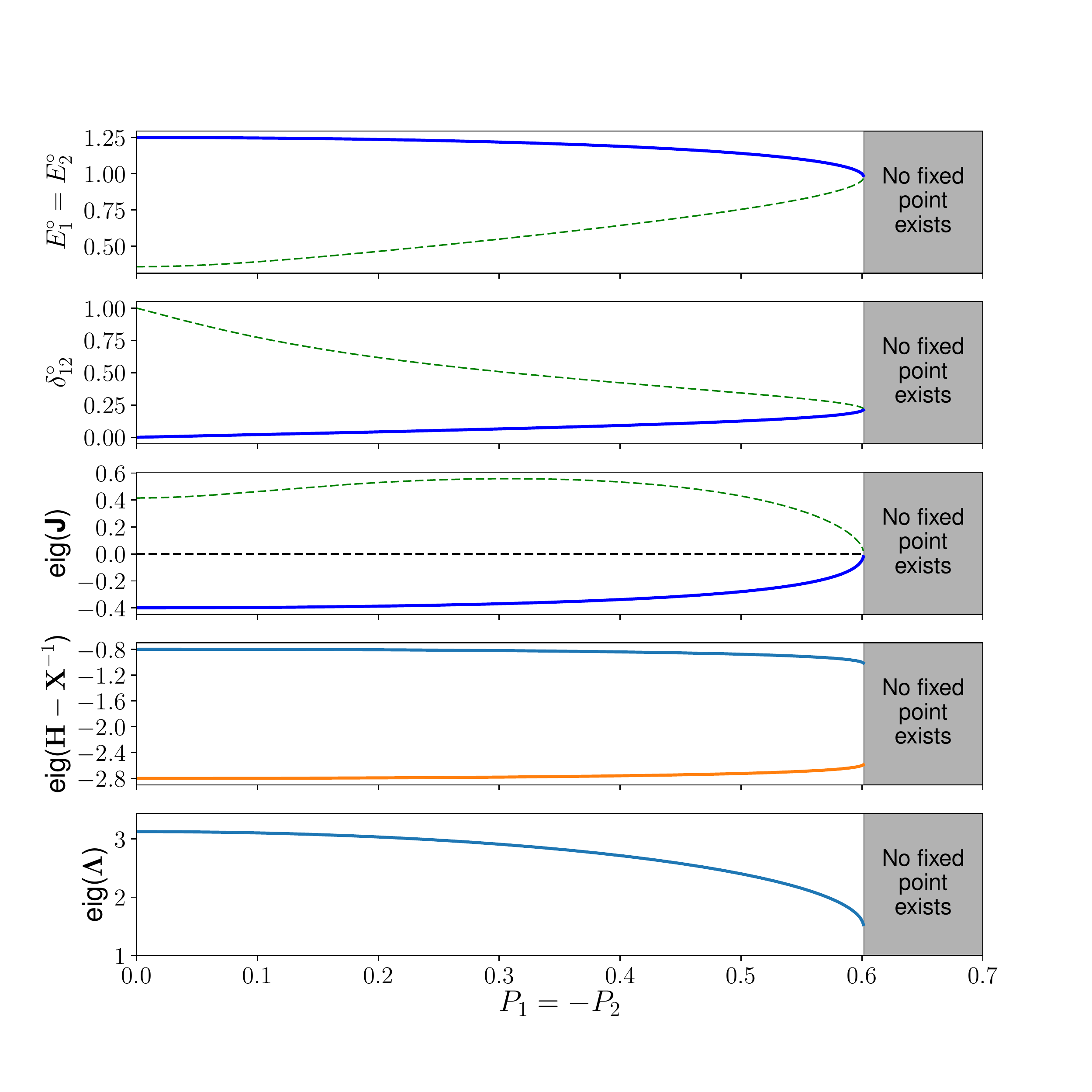}
\caption{
\label{fig:bifI-II}
Example of a mixed instability in a system of two coupled machines. As the transmitted power $P_{1} = - P_{2}$ increases to a critical value, a stable equilibrium (solid lines) and an unstable equilibrium (green dashed line) annihilate in an inverse saddle node bifurcation. The bifurcation corresponds to the border of the regions I and II in the stability map in Fig.~\ref{fig:stab-map}.
Shown are the state variables $E^\circ_{1} = E^\circ_2$ and the relative phase $\delta_{12}^\circ$ and the eigenvalues describing linear stability. Only the largest eigenvalues are shown for $\vec \Lambda$ and $\vec J$.  The remaining parameters are $X_1 - X'_1 = X_2 -X'_2 =1$, $B_{1,1} = B_{2,2} = -0.8$, $B_{1,2} = B_{2,1} = 1.0$, $E^{f}_1 = E_2^{f} = 1.0, D = 0.2, M = 1, T = 2$ in per unit system.
}
\end{figure}

The interplay of voltage and angle dynamics can lead to a third type of instability. A genuine mixed instability is observed if the Conditions I (a) and II (a) in Proposition \ref{cor:schur} are satisfied such that no `pure' angle or voltage instability occur, but one of the Conditions I (b) or II (b) is violated. An example of a mixed instability is shown in Fig.~\ref{fig:bifI-II}. The matrices quantifying voltage stability ($\vec H - \vec X^{-1}$) and angle stability ($\vec{\Lambda}$) remain negative and positive definite, respectively, but still stability is lost in a saddle node bifurcation when the transmitted power is increased.

It should be noted that a mixed instability is not exceptional, but the typical case. In the case of two coupled machines a pure angle instability is observed only if $X_j-X_j'=0$ for $j = 1, 2$ (cf. Fig.~\ref{fig:ins-freq}). If $X_j-X_j'>0$ an increase of the transmitted power can only lead to a mixed instability.

The emergence of mixed instabilities demand more rigid requirements for stability as discussed above. Based on proposition \ref{cor:schur}, we derive necessary and sufficient conditions in terms of network connectivity measures.

\begin{defn} 
Given the ordering \eqref{orderEig}, the second smallest eigenvalue $\lambda_2$ of the Laplacian matrix $\vec \Lambda$ is known as the Fiedler value or algebraic connectivity of a network\cite{Fied73}. The associated eigenvector $\vec{v}_F$ is the Fiedler vector\cite{Fied75}.
\label{Fiedler}
\end{defn}

The algebraic connectivity $\lambda_2$ is a measure of the connectivity of a graph, embodying its topological structure and connectedness. For a conventional graph with positive edge weights the algebraic connectivity is greater than $0$ if and only if the graph is connected (cf.~\ref{cor:laplacian-pos}).

%\subsubsection{Necessary conditions for stability}

In the limit $(X_j - X'_j) \equiv 0$ (no voltage dynamics) a necessary and sufficient condition for stability is that the Laplacian $\vec \Lambda$ is positive definite on $\bar{\mathcal{D}}_\perp$ which is equivalent to
\be 
    \lambda_2 \stackrel{!}{>} 0
\ee
as discussed in detail in section \ref{sec:angle-stab}. We can extend this necessary condition for the algebraic condition for small $(X_j - X'_j)$ by means of a Taylor expansion. To leading order, the necessary connectivity always increases with $(X_j - X'_j)$.

\begin{corr}
A necessary condition for the stability of an equilibrium point is given by
\be
   \lambda_2 > \sum_{j=1}^{N} (X_j - X'_j) \left(\sum_{k=1}^{N} A_{jk} v_{Fk}\right)^2
        + \mathcal{O}((X_j - X'_j)^2),\nn
\ee
where $\vec v_F$ denotes the Fiedler vector for $(X_j - X'_j) \equiv 0$.
\end{corr}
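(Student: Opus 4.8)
The plan is to start from the second, ``voltage-first'' version of Proposition~\ref{cor:schur}. For all $(X_j-X_j')$ sufficiently small the diagonal matrix $\vec X^{-1}$ diverges while $\vec H$ stays bounded, so $\vec H-\vec X^{-1}$ is (strongly) negative definite and Condition II(a) holds automatically. Hence, in this regime, linear stability is governed entirely by Condition II(b): positive definiteness on $\bar{\mathcal D}_\perp$ of
\[
   \vec M_{II}(\vec X) := \vec \Lambda + \vec A^\top (\vec H - \vec X^{-1})^{-1} \vec A .
\]
The first step is to expand the middle factor in powers of the small diagonal matrix $\vec X$. Writing $\vec H-\vec X^{-1} = -\vec X^{-1}(\eye - \vec X \vec H)$ and using the Neumann series $(\eye-\vec X\vec H)^{-1} = \eye + \vec X\vec H + \mathcal{O}(\|\vec X\|^2)$ gives $(\vec H-\vec X^{-1})^{-1} = -\vec X + \mathcal{O}(\|\vec X\|^2)$, so that $\vec M_{II}(\vec X) = \vec \Lambda - \vec A^\top \vec X \vec A + \mathcal{O}(\|\vec X\|^2)$.

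Next I would apply first-order eigenvalue perturbation theory to the smallest eigenvalue of $\vec M_{II}$ restricted to $\bar{\mathcal D}_\perp$. At $\vec X = \vec 0$ this matrix is just the Laplacian $\vec \Lambda$, whose smallest eigenvalue on $\bar{\mathcal D}_\perp$ is the Fiedler value $\lambda_2$ with associated unit eigenvector $\vec v_F$ (Definition~\ref{Fiedler}). Since $\vec v_F\in\bar{\mathcal D}_\perp$, the Rayleigh–Schrödinger first-order correction is the expectation value of the perturbation in the state $\vec v_F$, so the relevant eigenvalue of $\vec M_{II}(\vec X)$ equals
\[
   \lambda_2 - \langle \vec v_F, \vec A^\top \vec X \vec A \, \vec v_F\rangle + \mathcal{O}(\|\vec X\|^2)
   = \lambda_2 - \sum_{j=1}^N (X_j-X_j')\Big(\sum_{k=1}^N A_{jk} v_{Fk}\Big)^2 + \mathcal{O}(\|\vec X\|^2),
\]
using $\langle \vec v_F, \vec A^\top \vec X \vec A \, \vec v_F\rangle = \langle \vec A\vec v_F, \vec X \vec A \vec v_F\rangle = \sum_j (X_j-X_j') (\vec A \vec v_F)_j^2$. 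For $\vec X$ small the other eigenvalues of $\vec M_{II}|_{\bar{\mathcal D}_\perp}$ remain strictly above this one (they start at $\lambda_3 \ge \lambda_2$ and move by $\mathcal{O}(\|\vec X\|)$), so it is the binding constraint. Positive definiteness of $\vec M_{II}$, necessary for stability by Proposition~\ref{cor:schur}~II, therefore forces precisely the stated inequality.

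The main obstacle is the usual one for this kind of expansion: rigour of the first-order correction requires $\lambda_2$ to be a \emph{simple} eigenvalue of $\vec \Lambda$ on $\bar{\mathcal D}_\perp$. If the Fiedler space is degenerate one must use degenerate perturbation theory and replace the correction by the smallest eigenvalue of $-\vec A^\top\vec X\vec A$ restricted to that eigenspace; the corollary as stated tacitly assumes a one-dimensional Fiedler space. One should also invoke analytic perturbation theory (Kato) to guarantee that the neglected term is genuinely $\mathcal{O}(\|\vec X\|^2)$ uniformly, which is licit because the unperturbed eigenvalue $\lambda_2$ is isolated. Everything else — the Neumann expansion and the quadratic-form manipulation — is routine.
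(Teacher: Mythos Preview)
Your proposal is correct and follows essentially the same route as the paper: start from Condition~II of Proposition~\ref{cor:schur}, expand $(\vec H-\vec X^{-1})^{-1}=-\vec X+\mathcal{O}(\|\vec X\|^2)$ via the Neumann series, and read off the first-order condition by evaluating the resulting quadratic form on the Fiedler vector. The only cosmetic difference is that the paper phrases the last step as ``insert the test vector $\vec v'_F$ into $\vec y^\top\vec\Lambda\vec y>\vec y^\top\vec A^\top(\vec X^{-1}-\vec H)^{-1}\vec A\vec y$ and then replace $\vec v'_F$ by $\vec v_F+\mathcal{O}(X)$'', whereas you cast the same computation as Rayleigh--Schr\"odinger perturbation of the smallest eigenvalue of $\vec M_{II}|_{\bar{\mathcal D}_\perp}$; your added remarks on simplicity of $\lambda_2$ and Kato's theorem are refinements the paper leaves implicit.
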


\begin{proof}
We denote by $\vec v_F$ the normalized Fiedler vector at $(X_j - X'_j) \equiv 0$ and by $\vec v'_F$ the actual normalized Fiedler vector for the particular non-zero value of the $(X_j - X'_j)$. Clearly, we have
\be 
   \vec v'_F = \vec v_F + \mathcal{O}((X_j - X'_j)^1).\nn 
\ee
Furthermore, we use the expansion
\be 
   (\vec X^{-1}- \vec H)^{-1} =  \sum_{\ell = 0}^\infty \vec X(\vec X \vec H)^\ell,\nn
\ee
such that 
\be 
   (\vec X^{-1}- \vec H)^{-1} = \vec X + \mathcal{O}((X_j - X'_j)^2). \nn
\ee
Now, condition II.(b) in proposition \ref{cor:schur} reads
\be 
    \forall \vec y: \quad 
    \vec y^\top \vec \Lambda \vec y > 
    \vec y^\top \vec A^\top (\vec X^{-1}- \vec H)^{-1} \vec A \vec y.\nn
\ee
Choosing one particular vector $\vec y$ one obtains a necessary condition for stability. Picking $\vec y = \vec v'_F$ we find
\be 
   \lambda_2 > \vec {v'}_F^{\top} \vec A^\top 
      (\vec X^{-1}- \vec H)^{-1} \vec A \vec v'_F ,\nn
\ee
and expanding the right-hand side to leading order in $(X_j - X'_j)$ yields
\be 
  \lambda_2 > \vec v_F^\top \vec A^\top \vec X \vec A \vec v_F 
     + \mathcal{O}((X_j - X'_j)^2).\nn
\ee
The matrix $\vec X$ is diagonal such that we can evaluate the expression further to obtain
\be 
   \lambda_2 > \sum_{j=1}^{N} (X_j - X'_j) \left(\sum_{k=1}^{N} A_{jk} v_{Fk} \right)^2 
     + \mathcal{O}((X_j - X'_j)^2).\nn
\ee 
\end{proof}

%\subsubsection{Sufficient conditions for stability}

Furthermore, we can derive two sufficient conditions for stability in terms of the algebraic connectivity. These results show that in the limit of high connectivity only the pure voltage dynamics determines the stability of the full system.

\begin{corr}
If the algebraic connectivity is positive $\lambda_2>0$ and for all nodes $j=1,\ldots,N$ we have
\be   
    \label{eq:stab-cond-Xl}
   (X_j - X'_j)^{-1} - \sum_{\ell=1}^N B_{j \ell} > 
     \frac{\| \vec A \|_{2}  \| \vec A^\top \|_{2}}{\lambda_2},
\ee
where $\| \cdot \|_{2}$ is the operator $\ell_2$-norm, then an equilibrium point is stable.
\end{corr}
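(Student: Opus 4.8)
The plan is to verify the two conditions of Proposition~\ref{cor:schur}, Part~I. Condition~I(a) is immediate: as recalled in Section~\ref{sec:angle-stab}, the Laplacian $\vec\Lambda$ is positive definite on $\bar{\mathcal{D}}_\perp$ if and only if its algebraic connectivity satisfies $\lambda_2>0$, which is assumed. Hence everything rests on establishing Condition~I(b), namely that $\vec H-\vec X^{-1}+\vec A\vec\Lambda^{+}\vec A^\top$ is negative definite, equivalently
\[
   \vec y^\top(\vec X^{-1}-\vec H)\vec y \;>\; \vec y^\top\vec A\vec\Lambda^{+}\vec A^\top\vec y
   \qquad\text{for all } \vec y\neq\vec 0 .
\]

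For the left-hand side I would reuse the elementary Gershgorin-type estimate from the proof of Corollary~\ref{cor:voltage1}: exploiting $B_{j,\ell}\ge 0$, $|\cos(\cdot)|\le 1$, $2|y_jy_\ell|\le y_j^2+y_\ell^2$ and the symmetry $B_{j,\ell}=B_{\ell,j}$, one obtains
\[
   \vec y^\top(\vec X^{-1}-\vec H)\vec y
   \;\ge\; \sum_{j=1}^N y_j^2\Big[(X_j-X_j')^{-1}-\sum_{\ell=1}^N B_{j,\ell}\Big]
   \;\ge\; c\,\|\vec y\|_2^2 ,
\]
with $c:=\min_{j}\big[(X_j-X_j')^{-1}-\sum_{\ell}B_{j,\ell}\big]$, which is strictly positive by hypothesis.

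For the right-hand side the key observation is that the row sums of $\vec A$ vanish, $\vec A\vec 1=\vec 0$ (the $k=j$ contribution to $A_{j,j}$ is zero and the remaining terms cancel the off-diagonal entries), so $\vec 1^\top\vec A^\top\vec y=(\vec A\vec 1)^\top\vec y=0$ and $\vec A^\top\vec y\in\bar{\mathcal{D}}_\perp$ for every $\vec y$. Since $\lambda_2>0$, the symmetric matrix $\vec\Lambda$ has kernel exactly $\mathrm{span}(\vec 1)$, range exactly $\bar{\mathcal{D}}_\perp$, and the restriction of $\vec\Lambda^{+}$ to $\bar{\mathcal{D}}_\perp$ has operator norm $1/\lambda_2$. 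Therefore
\[
   \vec y^\top\vec A\vec\Lambda^{+}\vec A^\top\vec y
   = (\vec A^\top\vec y)^\top\vec\Lambda^{+}(\vec A^\top\vec y)
   \;\le\; \frac{\|\vec A^\top\vec y\|_2^2}{\lambda_2}
   \;\le\; \frac{\|\vec A\|_2\,\|\vec A^\top\|_2}{\lambda_2}\,\|\vec y\|_2^2 ,
\]
using $\|\vec A^\top\|_2=\|\vec A\|_2$. Combining the two bounds with the assumption $c>\|\vec A\|_2\|\vec A^\top\|_2/\lambda_2$ gives $\vec y^\top(\vec X^{-1}-\vec H)\vec y-\vec y^\top\vec A\vec\Lambda^{+}\vec A^\top\vec y\ge\big(c-\|\vec A\|_2\|\vec A^\top\|_2/\lambda_2\big)\|\vec y\|_2^2>0$ for all $\vec y\neq\vec 0$, i.e. Condition~I(b) holds, and Proposition~\ref{cor:schur} (Part~I) then yields linear stability.

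I expect the only delicate points to be the two linear-algebra bookkeeping steps: checking $\vec A\vec 1=\vec 0$ so that $\vec A^\top\vec y$ really lies in the range of $\vec\Lambda$, where the pseudoinverse bound is valid, and justifying that the restriction of $\vec\Lambda^{+}$ to $\bar{\mathcal{D}}_\perp$ has norm $1/\lambda_2$ as a consequence of $\lambda_2>0$. The remaining estimates are identical in spirit to those already carried out for Corollary~\ref{cor:voltage1}.
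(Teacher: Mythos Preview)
Your proof is correct and follows essentially the same route as the paper: verify Condition~I(a) directly from $\lambda_2>0$, then establish Condition~I(b) by bounding $\vec y^\top(\vec X^{-1}-\vec H)\vec y$ from below via the Gershgorin-type estimate of Corollary~\ref{cor:voltage1} and bounding $\vec y^\top\vec A\vec\Lambda^{+}\vec A^\top\vec y$ from above via $\|\vec\Lambda^{+}\|_2=1/\lambda_2$ and submultiplicativity. Your extra observation $\vec A\vec 1=\vec 0$ is correct but not strictly needed, since $\vec\Lambda^{+}$ already has operator norm $1/\lambda_2$ on all of $\mathbb{R}^N$ (it vanishes on the kernel direction), so the bound $(\vec A^\top\vec y)^\top\vec\Lambda^{+}(\vec A^\top\vec y)\le\|\vec A^\top\vec y\|_2^2/\lambda_2$ holds regardless.
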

 
\begin{proof}
(1) If $\lambda_2>0$ this directly implies that $\vec \Lambda$ is positive definite on $\bar{\mathcal{D}}_\perp$ and condition I.(a) in proposition \ref{cor:schur} is satisfied.
 
(2) Using Gershgorin's circle theorem as in the proof of corollary \ref{cor:voltage1} we find that the condition (\ref{eq:stab-cond-Xl}) implies that
$
    \vec X^{-1} - \vec H 
      - \lambda_2^{-1}\| \vec A \|_{2}  \| \vec A^\top \|_{2} \eye 
$ is positive definite on $\bar{\mathcal{D}}_\perp$. Noting that $\lambda_2^{-1} = \| \vec \Lambda^+ \|_{2}$, this implies
\begin{align}
  \forall \vec y: \quad
        \vec y^\top (\vec X^{-1} - \vec H) \vec y 
        & > \| \vec A \|_{2} \| \vec \Lambda^+ \|_{2} \| \vec A^\top \|_{2}
             \| \vec y \|^2 \nn \\
        & \ge \| \vec A  \vec \Lambda^+ \vec A^\top \|_{2} \| \vec y \|^2 \nn \\
        & \ge \vec y^\top \vec A  \vec \Lambda^+ \vec A^\top \vec y.
\end{align}
Hence, matrix $\vec X^{-1} - \vec H - \vec A  \vec \Lambda^+ \vec A^\top$ is positive definite on $\bar{\mathcal{D}}_\perp$. Condition I.(b) in proposition \ref{cor:schur} is satisfied and the equilibrium is stable.
\end{proof}

\begin{corr}
If the voltage subsystem is stable, i.e.~$\vec X^{-1} - \vec H$ is positive definite, and the algebraic connectivity satisfies
\be   
    \label{eq:stab-cond-lam}
    \lambda_2 > \| \vec A^\top  \vec (\vec X^{-1} - \vec H)^{-1}  \vec A \|_2,
\ee
where $\| \cdot \|_2$ is the operator $\ell_2$-norm, then an equilibrium point is stable.
\end{corr}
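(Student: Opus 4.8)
The plan is to deduce stability straight from Part~II of Proposition~\ref{cor:schur}, checking its two conditions in turn. Condition~II(a) asks that $\vec H - \vec X^{-1}$ be negative definite, which is exactly the hypothesis that $\vec X^{-1} - \vec H$ is positive definite, so nothing is left to do there. The work is therefore entirely in Condition~II(b): showing that $\vec \Lambda + \vec A^\top (\vec H - \vec X^{-1})^{-1} \vec A$ is positive definite on $\bar{\mathcal{D}}_\perp$. Since $\vec X^{-1} - \vec H$ is invertible, I would first rewrite $(\vec H - \vec X^{-1})^{-1} = -(\vec X^{-1} - \vec H)^{-1}$, so that the target becomes the positive definiteness of $\vec \Lambda - \vec A^\top (\vec X^{-1} - \vec H)^{-1} \vec A$ on $\bar{\mathcal{D}}_\perp$.

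Next I would set up two complementary estimates. On the one hand, the hypothesis \eqref{eq:stab-cond-lam} forces $\lambda_2 > \| \vec A^\top (\vec X^{-1} - \vec H)^{-1} \vec A \|_2 \ge 0$; together with the identity $\vec \Lambda \vec 1 = \vec 0$ (a Laplacian has vanishing row sums) this means $0$ is the smallest eigenvalue of $\vec \Lambda$, simple, with eigenvector $\vec 1$. Consequently $\bar{\mathcal{D}}_\perp$ is precisely the orthogonal complement of that eigenvector, and the Courant--Fischer (min--max) characterization gives $\vec y^\top \vec \Lambda \vec y \ge \lambda_2 \| \vec y \|^2$ for all $\vec y \in \bar{\mathcal{D}}_\perp$. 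On the other hand, $(\vec X^{-1} - \vec H)^{-1}$ is symmetric positive definite, hence $\vec A^\top (\vec X^{-1} - \vec H)^{-1} \vec A$ is symmetric positive semidefinite, and for every $\vec y$ one has the operator-norm bound $\vec y^\top \vec A^\top (\vec X^{-1} - \vec H)^{-1} \vec A \vec y \le \| \vec A^\top (\vec X^{-1} - \vec H)^{-1} \vec A \|_2 \| \vec y \|^2$.

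Combining the two estimates, for any nonzero $\vec y \in \bar{\mathcal{D}}_\perp$ the quadratic form $\vec y^\top ( \vec \Lambda - \vec A^\top (\vec X^{-1} - \vec H)^{-1} \vec A ) \vec y$ is bounded below by $( \lambda_2 - \| \vec A^\top (\vec X^{-1} - \vec H)^{-1} \vec A \|_2 ) \| \vec y \|^2$, which is strictly positive by \eqref{eq:stab-cond-lam}. Thus Condition~II(b) holds, and Proposition~\ref{cor:schur}(II) certifies linear stability, completing the argument.

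I do not expect a serious obstacle here: the proof is a short chain of norm inequalities feeding into Proposition~\ref{cor:schur}. The one point that deserves care is the claim that $\min \{ \vec y^\top \vec \Lambda \vec y / \| \vec y \|^2 : \vec y \in \bar{\mathcal{D}}_\perp, \, \vec y \neq \vec 0 \} = \lambda_2$; this uses that $\vec 1$ is, up to scaling, the unique null vector of $\vec \Lambda$, which is guaranteed precisely because \eqref{eq:stab-cond-lam} makes $\lambda_2$ strictly positive and thereby rules out any negative eigenvalue of $\vec \Lambda$.
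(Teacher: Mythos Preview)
Your proof is correct and follows essentially the same route as the paper: verify Condition~II(a) directly from the hypothesis, then establish Condition~II(b) via the chain $\vec y^\top \vec \Lambda \vec y \ge \lambda_2 \|\vec y\|^2 > \| \vec A^\top (\vec X^{-1} - \vec H)^{-1} \vec A \|_2 \|\vec y\|^2 \ge \vec y^\top \vec A^\top (\vec X^{-1} - \vec H)^{-1} \vec A \vec y$ on $\bar{\mathcal{D}}_\perp$. If anything, you are more careful than the paper in justifying the first inequality, correctly noting that \eqref{eq:stab-cond-lam} forces $\lambda_2 > 0$ and hence (since $\vec \Lambda \vec 1 = \vec 0$) that $\vec 1$ spans the kernel, so the Rayleigh--quotient bound on $\bar{\mathcal{D}}_\perp$ is indeed $\lambda_2$.
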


\begin{proof} 
(1) By assumption, we have $\vec X^{-1} - \vec H$ positive definite such that condition II.(a) is satisfied in proposition \ref{cor:schur}.

(2) Furthermore, the assumption (\ref{eq:stab-cond-lam}) implies that
\begin{align}
   \forall \vec y \in \bar{\mathcal{D}}_\perp: \quad
   \vec y^\top \vec \Lambda \vec y & \ge \lambda_2 \|\vec y\|^2 \nn \\
   &> \| \vec A^\top  \vec (\vec X^{-1} - \vec H)^{-1}  \vec A \|_2 
          \|\vec y\|^2 \nn \\
   &\ge \vec y^\top \vec A^\top  \vec (\vec X^{-1} - \vec H)^{-1}  \vec A \vec y, \nn
\end{align}
such that the matrix $\vec \Lambda - \vec A^\top  \vec (\vec X^{-1} - \vec H)^{-1}  \vec A$ is positive definite on $\bar{\mathcal{D}}_\perp$  and condition II.(b) in proposition \ref{cor:schur} is also satisfied. Hence the equilibrium is stable.
\end{proof}
 
%\section{Example:  Two Bus System} 
%
%\TF{Change title?}
%\dirk{Yeaah, but this section is more than an example. In addition to analyzing one given fixed point or so as above, we also have a global stability map.}
%
%\dirk{I don't see the reason in using subsections if we only have one subsections in a section.}

\section{Existence of Equilibria and Stability Map in Two Bus Systems}\label{sec:example}

The analysis of the Jacobian reveals the stability of a given equilibrium and different routes to instability. We now investigate when a stable equilibrium exists and derive the global stability map of the grid. We focus on an elementary network of two coupled machines, one with positive effective power $P_1>0$ and one with negative effective power $P_2 = -P_1$. All other machine parameters are assumed to be identical,
 $E_2^{f} = E_1^{f}>0$, $X_2-X'_2=X_1-X'_1 > 0$ , $B_{2,2} = B_{1,1} <0$ and $B_{2,1}=B_{1,2} > 0$. 

This example allows for a fully analytical solution of the equilibrium equations (\ref{eq:3rd-fixed}). Stability requires that phase difference satisfies $\delta_1 - \delta_2 \in (-\pi/2, \pi/2)$. Equation \eqref{eq:3rd-fixed_b} in (\ref{eq:3rd-fixed}) can then be solved for the phase difference
\be
   \delta_{1}^\circ - \delta_{2}^\circ = \arcsin \left(\frac{P_1}{B_{1,2} E_1^\circ E_2^\circ} \right).
   \label{eq:2machine-d12}
\ee
Substracting  \eqref{eq:3rd-fixed_c} for one machine from that for the other machine one can show that the voltages at both machines must be identical in a stable equilibrium, 
\[
    E_2^\circ = E_1^\circ.
\]
Using the result (\ref{eq:2machine-d12}), we can now eliminate the phases from  \eqref{eq:3rd-fixed_c}. We are left with an equation containing only the state variable $E_1^\circ$:
\begin{align}
    & \Big\{  \big[ (X-X') B_{1,1} -1 \big]^2 - (X-X')^2 B_{1,2}^2 \Big \} E_1^{\circ 4}
          + E^{f 2}  E_1^{\circ 2} \nn \\
    &     + 2 \big[ (X-X') B_{1,1} -1 \big] E^{f}  E_1^{\circ 3} + (X-X')^2 P_1^2 = 0.   
    \label{eq:polynomial}
\end{align}
This is a fourth-order polynominal in $E_1^\circ$, which can be solved analytically leading to rather lengthy expressions. More importantly, the fundamental theorem of algebra tells us that there are exactly four solutions for a given set of system parameters. Discarding solutions where $E_{1,2}^\circ$ or $\delta_{1}^\circ - \delta_2^\circ$ are complex or $E_{1,2}^\circ < 0$, which are physically not meaningful, we obtain a set of equilibria, which can be stable or unstable, though. 

Solving equation (\ref{eq:polynomial}) as a function of the system parameters provides a general stability map and bifurcation set. Here, we focus on the dependence on the machine parameters $P_1=-P_2$ and $X-X'$, while keeping the transmission system parameters fixed. The stability map in Fig.~\ref{fig:stab-map} reveals three qualitatively different regimes. In region I the polynomial (\ref{eq:polynomial}) has two real roots, one corresponding to a stable and one to an unstable equilibrium. Hence a stable operation of the grid is possible. 

The phase difference decreases steadily with the increase of $X-X'$, whereas the nodal voltages increase. Notably, the maximum transmitted power (the maximum power $P_1$ for which a stable solution exists, corresponding to the boundary of regions I and II) firstly decreases with $X-X'$ and then increases for larger values of $X-X'$ due to 2nd power dependence on the voltages on nodes. Near the border of the two regimes I-III the maximum transmitted power is effectively infinite, as the voltage diverges.

\begin{figure}[tb]
\centering
\includegraphics[width=0.8\columnwidth]{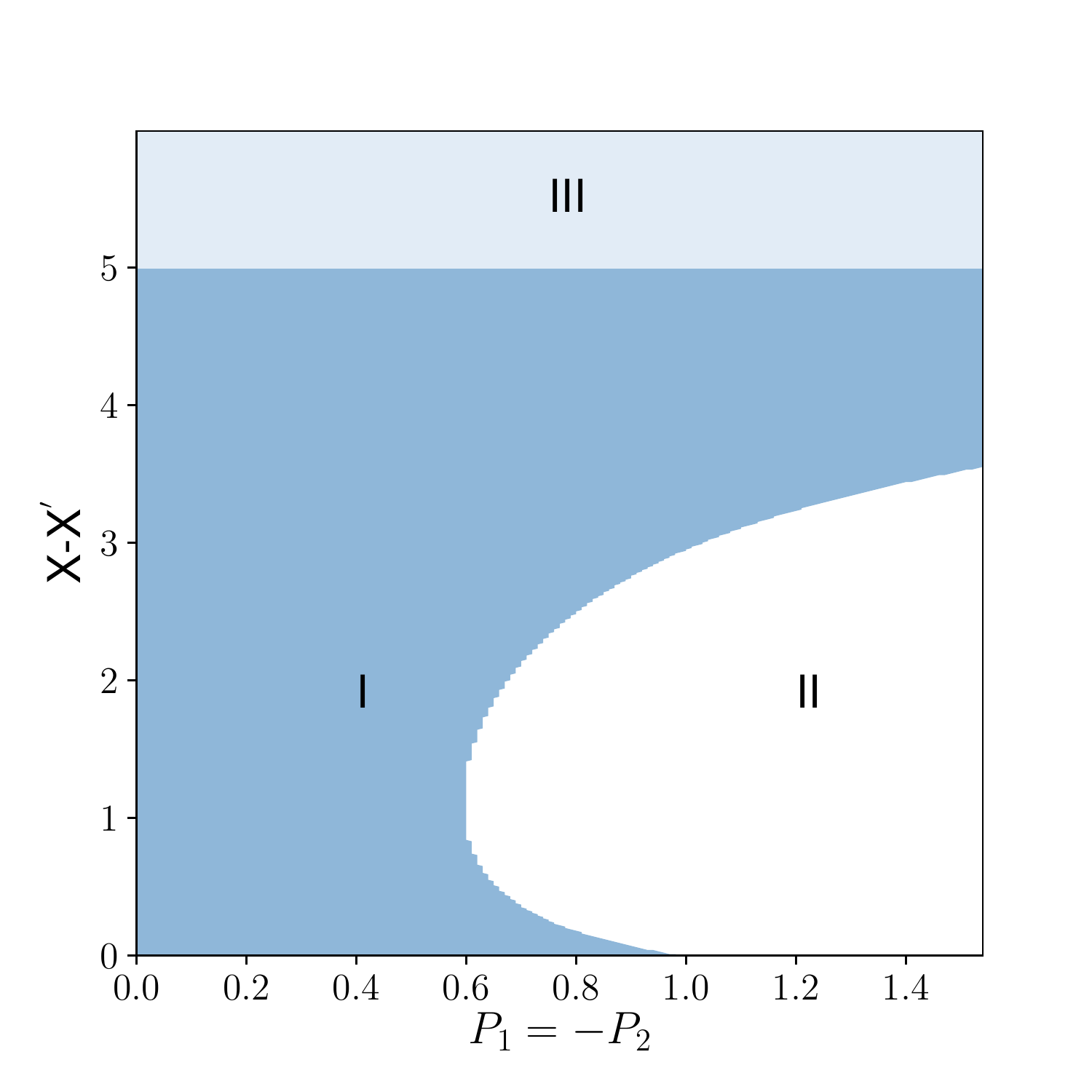}
\caption{
\label{fig:stab-map}
Stability map for a system of two coupled identical machines. A stable equilibrium exists only in region I. Stability is lost at the boundary to region II in an inverse saddle node bifurcation (see Figs.~\ref{fig:ins-freq} and \ref{fig:bifI-III} for examples). At the boundary to region III the voltage $E_1^\circ = E_2^\circ$ diverges and changes sign such that the equilibrium becomes unphysical. This corresponds to a pure voltage instability shown in Fig.~\ref{fig:bifI-II}.
The stability map is obtained from the solution of the fourth order polynomial (\ref{eq:polynomial}). At the boundary I-II two real solutions become complex and the discriminant/determinant of the polynomial changes sign.
}
\end{figure}

When the system is approaching the ``phase border'' between region I and II in the parameter space, two equilibria, stable and unstable, merge in an inverse saddle-node bifurcation, and the equilibrium is lost, accompanied with the birth of 2 complex roots of the polynomial (\ref{eq:polynomial}). The border line can be obtained analytically from the determinant of the fourth-order polynomial (\ref{eq:polynomial}) as it sign changes from $-$ to $+$ when passing through the line I-II. Therefore in region II all four roots of (\ref{eq:polynomial}) are complex and no stable equilibrium exits. For $X-X' = 0$ the saddle-node bifurcation corresponds to a pure rotor-angle instability (cf.~Fig.~\ref{fig:ins-freq}), for other values of $X-X'$ it corresponds to a mixed instability, cf. Fig.~\ref{fig:bifI-II}.

In region III the polynomial (\ref{eq:polynomial}) has two complex and two real roots with opposite signs. Only the positive real root corresponds to a physical solution, but it is always unstable, such that no stable operation is possible. By approaching the border I-III from below the coefficient of the fourth power in the polynomial (\ref{eq:polynomial}) becomes infinitesimally small. Consequently the magnitude of one root goes to infinity on the border, thereby changing its sign from positive in region I to negative in region III (cf.~Fig.~\ref{fig:bifI-III}). 

In physical terms, the voltage magnitude increases with $X-X'$ until it diverges on the border of the regions I-III. On the border one faces a pure voltage instability as illustrated in Fig.~\ref{fig:bifI-III}. No stable operation is possible if the difference of static and transient reactance $X-X'$ exceeds a critical value (region III). Notably the divergence of the voltage magnitude implies that the phase difference $\delta_1^\circ - \delta_2^\circ \rightarrow 0$ on the I-III-border if the transmitted power is finite. 

\section{Conclusion and Outlook} \label{sec:conclusions}
This paper has investigated the third-order model of power system dynamics with a focus on the interplay of rotor-angle and voltage stability. We employed a linearization approach and derived necessary and sufficient conditions for local exponential stability and uncovered different routes to instability. For the simplified case of a two-bus system, the positions of the equilibria as well as a global stability map was obtained fully analytically.

Our paper provides several rigorous results which can deepen our understanding into the factors limiting power system stability. In particular, the Proposition \ref{cor:schur} provides a decomposition of the Jacobian into the angle and voltage subspace by means of the Schur decomposition and thus allows to rigorously classify possible routes to instability. Stability of the angle subsystem requires that voltage phase angle differences remain small, see Corollaries \ref{cor:laplacian-pos} and \ref{cor:laplacian-pos2}. Stability of the voltage subsystem is threatened if the difference of the static and transient reactances $X-X'$ becomes too large as expressed in corollary \ref{cor:voltage1} and \ref{cor:voltage2}. Mixed instabilities can emerge due to the interplay of both subsystems.

Furthermore, our paper provides analytical insights into how the structure of a network determines its dynamics by linking stability properties to measures of connectivity. Future research can deepen the understanding and the applicability of our result. In particular, our approach can be extended towards models of increasing complexity such as the fourth order model. Analytic results can be compared to numerical simulations to test how tight the derived bounds are and to clarify the importance of line losses for the stability problem.

\begin{acknowledgments}

All authors gratefully acknowledge support from the Helmholtz Association via the joint initiative ``Energy System 2050 -- A Contribution of the Research Field Energy''. KS, LRG, MM, DW also acknowledge support by the Helmholtz Association under grant no.~VH-NG-1025 to DW and by the Federal Ministry of Education and Research (BMBF grant nos.~03SF0472).  TF acknowledges further support from the Baden-W\"urttemberg Stiftung in the Elite Programme for Postdocs. We thank Benjamin Sch\"afer, Moritz Th\"umler, Kathrin Schmietendorf, Oliver Kamps and Robin Delabays for stimulating discussions.

\end{acknowledgments}

\appendix*
\section{Construction of the Schur decomposition}

In the appendix we summarize some technical results which are used to proof the main results of the paper.

\begin{lemma}
\label{lem_for_theorem}
Consider the matrix $\vec U$ defined in (\ref{eq:defU}). If $\vec x \in \tilde{\mathcal{D}}_\perp$, then also $\vec y = \vec U \vec x \in \tilde{\mathcal{D}}_\perp$.
\end{lemma}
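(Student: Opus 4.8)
The plan is to make the membership condition explicit on both sides and then reduce everything to a single structural fact about $\vec \Lambda$. First I would write $\vec x = (\vec \xi, \vec \epsilon)^\top$ and compute directly from the definition \eqref{eq:defU} that $\vec y = \vec U \vec x = (\vec \xi - \vec \Lambda^{+} \vec A^\top \vec \epsilon, \; \vec \epsilon)^\top$. Membership in $\tilde{\mathcal{D}}_\perp$ constrains only the first block, via $\vec 1^\top(\cdot) = 0$, and the hypothesis $\vec x \in \tilde{\mathcal{D}}_\perp$ already gives $\vec 1^\top \vec \xi = 0$; so the claim $\vec y \in \tilde{\mathcal{D}}_\perp$ reduces to showing $\vec 1^\top \vec \Lambda^{+} \vec A^\top \vec \epsilon = 0$ for the given $\vec \epsilon$, and it is cleanest to prove the stronger statement $\vec 1^\top \vec \Lambda^{+} = \vec 0^\top$.

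The second step is to establish $\vec 1^\top \vec \Lambda^{+} = \vec 0^\top$. Here I would invoke that $\vec \Lambda$ in \eqref{eq:def_matrices} is symmetric with vanishing row sums --- the diagonal entry $\Lambda_{jj}$ is by construction the negative of the sum of the off-diagonal entries of row $j$ --- so $\vec \Lambda \vec 1 = \vec 0$, i.e.\ $\vec 1 \in \ker \vec \Lambda$. Then I would use the spectral decomposition $\vec \Lambda = \sum_i \lambda_i \vec v_i \vec v_i^\top$ in an orthonormal eigenbasis, for which $\vec \Lambda^{+} = \sum_{\lambda_i \neq 0} \lambda_i^{-1} \vec v_i \vec v_i^\top$; this manifestly annihilates every null vector of $\vec \Lambda$, so $\vec \Lambda^{+} \vec 1 = \vec 0$, and by symmetry of $\vec \Lambda^{+}$ also $\vec 1^\top \vec \Lambda^{+} = \vec 0^\top$. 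Combining with the first step closes the argument; the converse inclusion, if wanted, is immediate since $\vec U$ is invertible with $\vec U^{-1}$ of the same upper-triangular form.

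I do not anticipate a genuine obstacle: the only point needing a sentence of justification is $\ker \vec \Lambda^{+} = \ker \vec \Lambda$, which is the standard fact that the Moore--Penrose pseudoinverse of a symmetric (indeed normal) matrix preserves kernels, and is immediate from the spectral decomposition above. The conceptual content is simply that $\vec U$ shifts the angle block by an element of the range of $\vec \Lambda^{+}$, which coincides with the range of $\vec \Lambda$ and is therefore orthogonal to $\ker \vec \Lambda \ni \vec 1$; stating this reduction cleanly is the main thing to get right.
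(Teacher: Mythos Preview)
Your proposal is correct and follows essentially the same route as the paper: compute $\vec y = \vec U\vec x$ block-wise, use $\vec 1^\top\vec\xi=0$ from the hypothesis, and then argue that $\vec 1^\top\vec\Lambda^{+}=\vec 0^\top$ so the $\vec\Lambda^{+}\vec A^\top\vec\epsilon$ term is automatically orthogonal to $\vec 1$. The paper phrases this last point as ``$\vec\Lambda^{+}$ maps onto $\bar{\mathcal D}_\perp$ since $\vec\Lambda$ does,'' whereas you spell it out via the spectral decomposition of the symmetric Laplacian; your version is more explicit but the content is identical. One small note: the paper's proof also records $\vec x\neq\vec 0\Rightarrow\vec y\neq\vec 0$ (used downstream in Proposition~\ref{cor:schur} via~\eqref{eq:xy-definite}), which follows from the invertibility of $\vec U$ you already pointed out.
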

\begin{proof}
Let's write $\vec x = (\vec \xi \vec \epsilon)^\top$, then 
\begin{align}
	& \vec y = \begin{pmatrix} \eye & -\vec \Lambda^{+} \vec{A^\top}  \\ 0 & \eye \end{pmatrix} \begin{pmatrix}
	\vec \xi \\ \vec \epsilon
	\end{pmatrix} 
	 = \begin{pmatrix} \vec \xi - \vec \Lambda^{+} \vec{A^\top} \vec \epsilon \\ \vec \epsilon \end{pmatrix}.
\end{align}
$\eye \vec \xi = 0 $ because $\vec x \in \tilde{\mathcal{D}}_\perp$. $\eye \vec \Lambda^{+} \vec{A^{\top}}\vec \epsilon = 0$ because $\vec \Lambda^{+}$ maps onto $\bar{\mathcal{D}}_\perp$ as $\vec \Lambda$ maps onto $\bar{\mathcal{D}}_\perp$. Then $(\eye\vec \xi - \vec \Lambda^{+} \vec{A^\top} \vec \epsilon) = 0 $ and therefore $\vec y \in \tilde{\mathcal{D}}_\perp$. Also $\vec y \neq 0$, as if it is equal to zero, then 
\begin{align}
\begin{cases}
& \vec \xi - \vec \Lambda^{+} \vec{A^\top} \vec \epsilon = 0, \\ 
& \vec \epsilon = 0,
\end{cases}
\end{align}
which is realized if and only if
\begin{align}
\begin{cases}
& \vec \xi = 0, \\ 
& \vec \epsilon = 0,
\end{cases}
\end{align}
meaning $\vec x = 0$, that contradicts with the condition $\vec x \neq 0$.
\end{proof}

%\section*{Acknowledgements}

% --- Literatur -------------------------------------------------------------------

%\bibliographystyle{IEEEtran}
\bibliography{powerdyn_1}

\end{document}